\def\ps@headings{%
\def\@oddhead{\mbox{}\scriptsize\rightmark \hfil \thepage}%
\def\@evenhead{\scriptsize\thepage \hfil \leftmark\mbox{}}%
\def\@oddfoot{}%
\def\@evenfoot{}}
\newtheoremstyle{dotless-thm}
{3pt}
{3pt}
{\it}
{}
{\bfseries}
{}
{.5em}
{}
\theoremstyle{dotless-thm}
\newtheorem{theorem}{Theorem}
\newtheorem{property}[theorem]{Property}
\newtheorem{lemma}[theorem]{Lemma}
\newtheorem{definition}[theorem]{Definition}
\newtheorem{remark}[theorem]{Remark}
\begin{document}

\title{Sketch {\Huge ${\star}$}-metric: Comparing Data Streams via Sketching\\{\sc Research Report}}

\author{\IEEEauthorblockN{Emmanuelle Anceaume}
\IEEEauthorblockA{IRISA / CNRS, 
Rennes, France\\
\url{Emmanuelle.Anceaume@irisa.fr}}
\and
\IEEEauthorblockN{Yann Busnel}
\IEEEauthorblockA{LINA / Universit\'e de Nantes, 
Nantes, France\\
\url{Yann.Busnel@univ-nantes.fr}}
} 

\maketitle

\begin{abstract}
In this paper, we consider the problem of estimating the distance between any two large data streams in small-space constraint. This problem is of utmost importance in data intensive monitoring applications where input streams are generated rapidly. These streams need to be processed on the fly and accurately to quickly determine any deviance from nominal behavior. We present a new metric, the \emph{Sketch $\star$-metric}, which allows to define a distance between updatable summaries (or sketches) of large data streams. An important feature of the \emph{Sketch $\star$-metric} is that, given a measure on the entire initial data streams, the \emph{Sketch $\star$-metric} preserves the axioms of the latter measure on the sketch (such as the non-negativity, the identity, the symmetry, the triangle inequality but also  specific properties of the $f$-divergence). 
Extensive experiments conducted on both  synthetic  traces and real data allow us to validate the robustness and accuracy  of the \emph{Sketch $\star$-metric}. 
\end{abstract}

\begin{IEEEkeywords}
Data stream; metric; randomized approximation algorithm.
\end{IEEEkeywords}

\IEEEpeerreviewmaketitle


\section{Introduction}\label{toc:intro}
The main objective of this paper is to propose a novel  metric that reflects the relationships between any two  discrete probability distributions   in the context of massive data streams. Specifically, this metric, denoted by \emph{Sketch {$\star$}-metric},    allows us to efficiently  estimate a broad class of  distances measures between any two large data streams by computing these distances only using  compact synopses or sketches of the streams.    The  \emph{Sketch {$\star$}-metric} is distribution-free and makes no assumption about the underlying data volume. It is thus capable of comparing any two data streams, identifying their  correlation if any, and more generally, it allows us to acquire a deep understanding of the structure of the input streams.  Formalization of this metric is the first contribution of this paper. 

The interest  of estimating  distances  between any two data streams is important in  data intensive applications. Many different domains  are  concerned by such analyses  including machine learning, data mining, databases, information retrieval, and network monitoring.  In all these applications,  it is necessary  to quickly and precisely   process a huge amount of data. For instance, in IP network management, the analysis of input streams will  allow  to rapidly  detect the presence of outliers or intrusions when changes in the communication patterns occur~\cite{SKSZC03}. In sensors networks, such an analysis will enable the   correlation of  geographical and environmental informations~\cite{Busnel2008SOLIST-or-Ho,CHZ02}.  Actually, the problem of detecting changes or outliers in a data stream is similar to the problem of identifying patterns that do not conform to the expected behavior, which has been an active area of research for many decades. For instance, depending on the specificities of the domain considered
and the type of outliers considered, different methods have been designed, namely classification-based, clustering-based, nearest neighbor based, statistical, spectral, and information theory.
To accurately analyze  streams of data,  a panel of information-theoretic measures and  distances have been proposed to answer  the specificities of the  analyses. Among them, the most commonly used are  the  Kullback-Leibler (KL) divergence~\cite{KLDIVERGE}, the $f$-divergence introduced by 
Csiszar, Morimoto and Ali \& Silvey~\cite{csiszar63,JPSJ.18.328,ali-silvey}, the Jensen-Shannon divergence and the Battacharyya distance~\cite{MR0010358}.  More details can be found in the comprehensive survey of Basseville~\cite{basseville}. 

Unfortunately, computing information theoretic measures of distances in the data stream model is challenging essentially because one needs to process streams on the fly (\emph{i.e}, in one-pass), on huge amount of data, and by using very little storage with respect to the size of the stream. In addition the analysis must be robust over time to detect any sudden change in the observed streams (which may be the manifestation of routers deny of service attack or worm propagation).  
 We tackle this issue by presenting  an approximation algorithm that constructs a sketch of the stream from which  the \emph{Sketch {$\star$}-metric} is  computed. This algorithm is a one-pass algorithm. It uses very basic computations, and little storage space (\emph{i.e.}, $\mathcal O\left( t (\log n + k \log m)\right)$ where $k$ and $t$ are precision parameters, $m$ is an upper-bound of stream size and $n$ the number of distinct items in the stream). It does not need any information on the size of input streams nor on their  structure. This consists in the second contribution of the paper.  
 
Finally,  the   robustness of our approach is validated with a detailed experimentation study based on synthetic traces that range from stable streams to highly skewed ones.

The paper is organized as follows. First, Section~\ref{sec:relatedwork} reviews the related work on  classical generalized metrics and their applications on the data stream model while 
Section~\ref{sec:model} describes this model. Section~\ref{sec:KL} presents the necessary background that makes the paper self-contained. Section~\ref{toc:star-metric} formalizes the  \emph{Sketch {$\star$}-metric}. Section~\ref{toc:algo} presents the algorithm that fairly approximates the \emph{Sketch {$\star$}-metric} in one pass and Section~\ref{toc:eval} presents extensive experiments (on both synthetic traces and real data) of our algorithm. 
Finally, we conclude in Section~\ref{sec:conclusion}.


\section{Related Work}\label{sec:relatedwork}

Work on data stream analysis mainly focuses on efficient methods (data-structures and algorithms) to  answer different kind of queries over massive data streams. Mostly, these methods consist in deriving statistic  estimators over the data stream, in creating  summary representations of streams (to build histograms, wavelets, and quantiles), and in comparing data streams. Regarding the construction of estimators, a seminal work is due to  Alon \emph{et al.}~\cite{AMS}. The authors  have proposed estimators of the frequency moments $F_{k}$ of a stream,  which are important statistical tools  that allow to quantify specificities of a data stream. Subsequently, a lot of attention has been  paid to the strongly related notion of the entropy of a stream, and all notions based on entropy (\emph{i.e.}, norm and relative entropy)~\cite{coverthomas1991}. These notions are essentially related to the  quantification of the amount of randomness of a stream (\emph{e.g},~\cite{CCM07,GMcV06,CBM06,LSOXZ06,EDCC2012,NCA2012}). The construction of  synopses or sketches  of the data stream have been proposed for different applications (\emph{e.g},~\cite{CCF04,CM05,CG07}). 

Distance and divergence measures are key measures in statistical inference and data processing problems~\cite{basseville}. There exists two largely used broad classes of measures, namely the $f$-divergences and the Bregman divergences. Among them, there exists two classical distances, namely the Kullback-Leibler (KL) divergence and the Hellinger distance, that are very important to quantify the amount of information that separates two distributions. In~\cite{EDCC2012}, the authors have proposed a one pass algorithm for estimating the KL divergence of an observed stream compared to an expected one.  Experimental evaluations have shown that the estimation provided by this algorithm is accurate for different adversarial settings for which the quality of other methods dramatically decreases. However, this solution assumes that the expected stream is the uniform one, that is a fully random stream.  Actually in~\cite{GIG07}, the authors propose a characterization of the information divergences that are not sketchable. They have proven that any distance that has not ``norm-like'' properties is not sketchable. In the present paper, we go one step further by formalizing a metric that allows to efficiently and accurately estimate a broad class of  distances measures between any two large data streams by computing these distances uniquely on compact synopses or sketches of  streams.


\section{Data Stream Model}\label{sec:model}

We consider a system in which a node $P$ receives a large data stream $\sigma=a_{1},a_{2},\ldots,a_{m}$ of data items that arrive sequentially. 
In the following, we describe a single instance of $P$, but clearly multiple instances of $P$ may  co-exist  in a system (\emph{e.g.}, in case $P$ represents a router, a base station in a sensor network). 
Each data item $i$ of the stream $\sigma$  is drawn from the universe $\Omega= \{1,2,\ldots,n\}$ where $n$ is very large.   Data items can be repeated multiple times in the stream.   In the following we suppose that  the length $m$ of the stream is not known.  Items in the stream arrive regularly and quickly, and due to memory constraints, need to be processed sequentially and in an online manner. Therefore, node $P$ can locally store only a small fraction of the items and perform simple operations on them. The  algorithms  we consider in this work are characterized by the fact that they can approximate some function on $\sigma$ with a very limited amount of memory. We refer the reader to~\cite{Muthukrishnan} for a detailed description of data streaming models and algorithms.  

%

\section{Information Divergence of Data Streams}\label{sec:KL}
\label{sec:preliminaries}

We first present notations and background that make this paper self-contained.

\subsection{Preliminaries}
A natural approach to study a data stream $\sigma$ is to model it as an empirical data distribution over the universe $\Omega$, given by $(p_1,p_2,\ldots, p_n)$ with $p_i = x_i/ m$, and $x_{i}=|\{j:a_{j}=i\}|$ representing  the number of times data item $i$ appears in $\sigma$. We have $m = \sum_{i \in \Omega} x_{i}$. 
\subsubsection{Entropy} 
  Intuitively, the entropy is a measure of the randomness of a data stream $\sigma$. The entropy $H(\sigma)$ is minimum (\emph{i.e.}, equal to zero)  when all the items in the stream are the same,  and it reaches its maximum (\emph{i.e.}, $\log_{2} m$) when all the items in the stream are distinct. Specifically, we have 
$H(\sigma)=-\sum_{i\in \Omega} p_{i}\log_{2} p_{i}.$
The $\log$ is to the base 2 and thus entropy is expressed in bits. By convention, we have $0\log 0=0$. 
Note that the number of times $x_{i}$ item $i$ appears in a stream  is commonly called the frequency of item $i$. The norm of the entropy is defined as $F_{H}=\sum_{i\in\Omega}x_{i}\log x_{i}$.


\subsubsection{2-universal Hash Functions} 
In the following, we intensively use hash functions randomly picked from a $2$-universal hash family. A collection $\mathcal H$ of hash functions $h: \{1,\ldots, M\} \rightarrow \{0, \ldots, M'\}$ is said to be \emph{2-universal} if for every ${h\in \mathcal H}$ and for every two different items $i,j \in [M]$,
$\mathbb P \{h(i)=h(j)\}\leq \frac{1}{M'},$
which is exactly the probability of collision obtained if the hash function assigned truly random values to any $i\in[M]$. In the following, notation $[M]$ means $\{1,\ldots,M\}$.


\subsection{Metrics and divergences}

\subsubsection{Metric definitions}

The classical definition of a metric is based on a set of four axioms. 
\begin{definition}[Metric]
Given a set $X$, a \textbf{metric} is a function $d : X \times X \rightarrow \mathbb R$ such that, for any $x,y,z\in X$, we have:
\begin{align}
\text{Non-negativity:} \quad & d(x, y) \geq 0 \label{eq:non-neg}\\
\text{Identity of indiscernibles:} \quad & d(x, y) = 0 \Leftrightarrow x = y\label{eq:identity}\\
\text{Symmetry:} \quad & d(x, y) = d(y,x) \label{eq:symmetry}\\
\text{Triangle inequality:} \quad & d(x, y) \leq d(x,z) + d(z,y) \label{eq:triangle-ineq}
\end{align}
\end{definition}

In the context of information divergence, usual distance functions  are not precisely metric. Indeed, most of divergence functions do not verify the 4 axioms, but only a subset of them. We recall  hereafter some definitions of generalized metrics.

\begin{definition}[Pseudometric]
Given a set $X$, a \textbf{pseudometric} is a function that verifies the axioms of a metric with the exception of the \emph{identity of indiscernible}, which is replaced by 
\[\forall x\in X, d(x,x) = 0.\]
\end{definition}

Note that this definition allows that $d(x,y) = 0$ for some $x\neq y$ in $X$.

\begin{definition}[Quasimetric]
Given a set $X$, a \textbf{quasimetric} is a function that verifies all the axioms of a metric with the exception of the \emph{symmetry} (\emph{cf.} Relation~\ref{eq:symmetry}).
\end{definition}

\begin{definition}[Semimetric]
Given a set $X$, a \textbf{semimetric} is a function that verifies all the axioms of a metric with the exception of the \emph{triangle inequality} (\emph{cf.} Relation~\ref{eq:triangle-ineq}).
\end{definition}

\begin{definition}[Premetric]
Given a set $X$, a \textbf{premetric} is a pseudometric  that relax both the \emph{symmetry}  and \emph{triangle inequality} axioms.
\end{definition}

\begin{definition}[Pseudoquasimetric]
Given a set $X$, a \textbf{pseudoquasimetric} is a function that relax both the \emph{identity of indiscernible}  and the \emph{symmetry} axioms.
\end{definition}

Note that the latter definition simply corresponds to a premetric satisfying the triangle inequality. Remark also that all the generalized metrics preserve the \emph{non-negativity} axiom.

\subsubsection{Divergences}

We now  give  the definition of two broad classes of generalized metrics, usually denoted as \emph{divergences}.

\paragraph{$f$-divergence} 

Mostly used in the context of statistics and probability theory, a $f$-divergence ${\mathcal D}_f$ is a premetric that guarantees monotonicity and convexity. 

\begin{definition}[$f$-divergence] \label{def:f-div}
Let $p$ and $q$ be two $\Omega$-point distributions. Given a convex function $f : (0, \infty)\rightarrow \mathbb R$ such that $f(1) = 0$, the \textbf{$f$-divergence} of $q$ from $p$ is:
\[{\mathcal D}_f(p || q) = \sum_{i\in\Omega} q_i f\left(\frac{p_i}{q_i}\right),\]
where by convention $0 f(\frac{0}{0}) = 0$, $a f(\frac{0}{a}) = a\lim_{u\rightarrow 0} f(u)$, and $0 f(\frac{a}{0}) = a\lim_{u\rightarrow \infty} f(u)/u$ if these limits exist.
\end{definition}

Following this definition, any $f$-divergence verifies both monotonicity and convexity.
\begin{property}[Monotonicity] \label{prop:monotonicity}
Given $\kappa$ an arbitrary transition probability that respectively transforms two $\Omega$-point distributions $p$ and $q$ into $p_\kappa$ and $q_\kappa$, we have:
\[{\mathcal D}_f(p || q) \geq {\mathcal D}_f(p_\kappa || q_\kappa).\] 
\end{property}

\begin{property}[Convexity] \label{prop:convexity}
Let $p_1$, $p_2$, $q_1$ and $q_2$ be four $\Omega$-point distributions. Given any $\lambda \in [0,1]$, we have:
\begin{align*}{\mathcal D}_f\left(\lambda p_1 + (1-\lambda)p_2 || \lambda q_1 + (1-\lambda) q_2\right) \\
\leq \lambda {\mathcal D}_f(p_1 || q_1) + (1-\lambda) {\mathcal D}_f(p_2 || q_2).
\end{align*} 
\end{property}

This class of divergences has been introduced in  independent works by chronologically Csisz\'ar, Morimoto and Ali \& Silvey~\cite{csiszar63,JPSJ.18.328,ali-silvey}. All the distance measures in the so-called \emph{Ali-Silvey distances} are applicable to quantifying statistical differences between data streams.

\paragraph{Bregman divergence}

Initially proposed in~\cite{Bregman1967200}, this class of generalized metrics encloses quasimetrics and  semimetrics, as these divergences do not satisfy the triangle inequality nor symmetry.
\begin{definition}[Bregman divergence]
Given  $F$ a continuously-differentiable and strictly convex function defined on a closed convex set $C$, the \textbf{Bregman divergence} associated with $F$ for $p,q\in C$ is defined as
\[{\mathcal B}_F (p || q) = F(p) - F(q) - \left\langle \nabla F (q) , (p - q)\right\rangle.\]
where the operator $\langle\cdot,\cdot\rangle$ denotes the inner product.
\end{definition}

In the context of data stream, it is possible to reformulate this definition according to probability theory. Specifically,
\begin{definition}[Decomposable Bregman divergence]\label{def:bregman}
Let $p$ and $q$ be two $\Omega$-point distributions. Given a strictly convex function $F : (0, 1]\rightarrow \mathbb R$, the \textbf{Bregman divergence} associated with $F$ of $q$ from $p$ is defined as
\[{\mathcal B}_F (p || q) = \sum_{i\in\Omega} \left(F(p_i) - F(q_i) - (p_i - q_i) F' (q_i) \right).\]
\end{definition}

Following these definitions, any Bregman divergence verifies non-negativity and convexity in its first argument, but not necessarily in the second argument. Another interesting property is given by thinking of the Bregman divergences as an operator of the function $F$.
\begin{property}[Linearity] \label{prop:linearity}
Let $F_1$ and $F_2$ be two strictly convex and differentiable functions. Given any $\lambda \in [0,1]$, we have that
\[{\mathcal B}_{F_1+\lambda F_2} (p || q) = {\mathcal B}_{F_1} (p || q)  + \lambda {\mathcal B}_{F_2} (p || q) .\] 
\end{property}

\subsubsection{Classical metrics}

In this section, we present several commonly used metrics in $\Omega$-point distribution context. These specific metrics are used in the evaluation part presented in Section~\ref{toc:eval}.

\paragraph{Kullback-Leibler divergence}

The Kullback-Leibler (KL) divergence~\cite{KLDIVERGE}, also called the relative entropy, is a robust metric for measuring the statistical difference between two data streams. The KL divergence owns the special feature that it is both a $f$-divergence and a Bregman one (with $f(t) = F(t) = t \log t$).

 Given $p$ and $q$ two $\Omega$-point distributions, the Kullback-Leibler divergence is then defined as 
\begin{equation}\label{eq:kld}
\mathcal D_{KL}(p || q)  = \sum_{i\in\Omega} p_{i}\log\frac{p_{i}}{q_{i}}=H(p,q)-H(p),
\end{equation}
where $H(p)=-\sum p_{i}\log p_{i}$ is the (empirical) entropy of $p$ and $H(p,q)=-\sum p_{i}\log q_{i}$ is the cross entropy of $p$ and $q$.   

\paragraph{Jensen-Shannon divergence}
The Jensen-Shannon divergence (JS)  is a symmetrized and smoothed version of the Kullback-Leibler divergence. Also known as information radius (IRad) or total divergence to the average, it is defined as
\begin{equation}\label{eq:jsd}
\mathcal D_{JS}(p || q)  = \frac{1}{2} D_{KL}(p || \ell) + \frac{1}{2} D_{KL}(q || \ell),
\end{equation}
where $\ell=\frac{1}{2} (p + q)$. Note that the square root of this divergence is a metric.

\paragraph{Bhattacharyya distance} 
The Bhattacharyya distance is derived from his proposed measure of similarity between two multinomial distributions, known as the Bhattacharya coefficient (BC)~\cite{MR0010358}. It is defined as
\begin{align}\label{eq:bd}
\mathcal D_{B}(p || q)  = -\log (BC(p,q)) \text{ where } BC(p,q) = \sum_{i\in\Omega} \sqrt{{p_{i}}{q_{i}}}.\nonumber
\end{align}
This distance is a semimetric as it does not verify the triangle inequality. Note that the famous Hellinger distance~\cite{hellinger09} equal to  $\sqrt{1-BC(p,q)}$ verifies it.


\section{Sketch {\Large $\star$}-metric}\label{toc:star-metric}

We now present a method to sketch two input data streams $\sigma_{1}$ and $\sigma_{2}$, and to compute any generalized metric $\phi$ between these sketches  such that this computation preserves all the properties of  $\phi$  computed on $\sigma_{1}$ and $\sigma_{2}$.  Proof of correctness of this method  is presented in this section. 

 \begin{definition}[Sketch $\star$-metric]\label{def:sketch}
Let $p$ and $q$ be any two $n$-point distributions. Given a precision parameter $k$, and any generalized metric $\phi$ on the set of all $\Omega$-point distributions,  there exists  a  \textbf{Sketch $\star$-metric $\widehat{\phi} _{k}$}  defined as follows
\[\widehat{\phi} _{k} (p || q) = \max_{\rho \in \mathcal P_{k}(\Omega)} \phi (\widehat{p} _{\rho} || \widehat{q} _{\rho}) \mbox{ \textrm{with} } \forall a \in \rho, \; \widehat{p}_{\rho}(a) = \sum_{i\in a} p(i),\]
where $\mathcal P_{k}(\Omega)$ is the set of all partitions of an $\Omega$-element set into exactly $k$ nonempty and mutually exclusive cells. 
 \end{definition}
 
 \begin{remark} Note that for $k > \Omega$, it does not exist a partition of $\Omega$ into $k$ nonempty parts. By convention, we consider that $\widehat{\phi} _{k} (p || q) = {\phi}(p || q)$ in this specific context.
 \end{remark}
 
 In this section, we focus on the preservation of axioms and properties of a  generalized metric $\phi$ by the corresponding \emph{Sketch $\star$-metric} $\widehat{\phi}_k$.
 
 \subsection{Axioms preserving}
 
 \begin{theorem} \label{thm:metric}
 Given any generalized metric $\phi$ then, for any $k\in\Omega$, the corresponding Sketch $\star$-metric $\widehat{\phi}_k$ preserves all the axioms of $\phi$.
 \end{theorem}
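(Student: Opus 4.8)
The plan is to treat each of the four axioms separately and to show that, whenever $\phi$ satisfies a given axiom, the sketched function $\widehat{\phi}_k$ inherits it. The common mechanism is the observation that, for a fixed partition $\rho \in \mathcal{P}_k(\Omega)$, the aggregation $p \mapsto \widehat{p}_\rho$ is a linear map sending an $n$-point distribution to a genuine $k$-point distribution (the cell masses are non-negative and sum to $1$). Hence every term $\phi(\widehat{p}_\rho \| \widehat{q}_\rho)$ appearing in the definition is a legitimate evaluation of $\phi$, and the only new ingredient is the outer maximum over $\mathcal{P}_k(\Omega)$, which is a maximum over a finite nonempty set and is therefore attained. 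The whole proof reduces to understanding how this maximum interacts with each axiom.

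First I would dispatch the three ``easy'' axioms. Non-negativity is immediate, since each $\phi(\widehat{p}_\rho \| \widehat{q}_\rho) \geq 0$ and the maximum of non-negative numbers is non-negative. Symmetry holds because $\phi(\widehat{p}_\rho \| \widehat{q}_\rho) = \phi(\widehat{q}_\rho \| \widehat{p}_\rho)$ termwise, and taking the maximum of both sides over the same index set $\mathcal{P}_k(\Omega)$ preserves the equality. For the identity of indiscernibles, the direction $p=q$ makes $\widehat{p}_\rho = \widehat{q}_\rho$ for every $\rho$, whence $\widehat{\phi}_k(p\|q)=0$; conversely, $\widehat{\phi}_k(p\|q)=0$ together with the non-negativity of $\phi$ forces $\phi(\widehat{p}_\rho \| \widehat{q}_\rho)=0$, hence $\widehat{p}_\rho = \widehat{q}_\rho$, for every $\rho$. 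To recover $p=q$ I would, for each item $i\in\Omega$, use a partition in which $\{i\}$ is a singleton cell, giving $p_i = \widehat{p}_\rho(\{i\}) = \widehat{q}_\rho(\{i\}) = q_i$; such a partition exists whenever $k\geq 2$, by isolating $i$ and splitting the remaining $n-1$ items into $k-1$ cells.

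The main obstacle is the triangle inequality, the difficulty being that the partition optimizing the left-hand side need not optimize either of the terms on the right, so the maximum cannot be pushed through the sum naively. The idea is to anchor on the optimal partition for the left-hand side: let $\rho^\star \in \mathcal{P}_k(\Omega)$ attain $\widehat{\phi}_k(p\|q) = \phi(\widehat{p}_{\rho^\star} \| \widehat{q}_{\rho^\star})$. Applying the triangle inequality of $\phi$ to the $k$-point distributions $\widehat{p}_{\rho^\star}, \widehat{r}_{\rho^\star}, \widehat{q}_{\rho^\star}$ gives
\[
\phi(\widehat{p}_{\rho^\star} \| \widehat{q}_{\rho^\star}) \leq \phi(\widehat{p}_{\rho^\star} \| \widehat{r}_{\rho^\star}) + \phi(\widehat{r}_{\rho^\star} \| \widehat{q}_{\rho^\star}),
\]
and relaxing each summand to its own maximum, $\phi(\widehat{p}_{\rho^\star} \| \widehat{r}_{\rho^\star}) \leq \widehat{\phi}_k(p\|r)$ and $\phi(\widehat{r}_{\rho^\star} \| \widehat{q}_{\rho^\star}) \leq \widehat{\phi}_k(r\|q)$, yields the claim. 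The essential point is that all three maxima range over the \emph{same} partition set $\mathcal{P}_k(\Omega)$, so the single partition $\rho^\star$ may be substituted into all three evaluations, which is exactly what justifies the relaxation. I would close by noting the degenerate case $k=1$, where each sketch collapses to the trivial one-point distribution and the identity of indiscernibles is lost, so the statement is really intended for $2\le k\le n$.
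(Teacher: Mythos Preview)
Your proposal is correct and follows essentially the same approach as the paper: the paper proves the theorem by four separate lemmata, one per axiom, and in particular handles the triangle inequality exactly as you do, by fixing the partition $\rho^\star$ that attains $\widehat{\phi}_k(p\|q)$, applying the triangle inequality of $\phi$ at that partition, and then relaxing each summand to its own maximum over $\mathcal{P}_k(\Omega)$. Your additional remarks about the aggregation map producing genuine $k$-point distributions and about the degenerate case $k=1$ are extra care that the paper omits, but the core argument is identical.
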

 \begin{IEEEproof}
 The proof is directly derived from Lemmata~\ref{lem:non-neg}, \ref{lem:identity}, \ref{lem:symm} and~\ref{lem:triangle}.
 \end{IEEEproof}

 \begin{lemma}[Non-negativity]\label{lem:non-neg}
  Given any generalized metric $\phi$  verifying the Non-negativity axiom then, for any $k\in\Omega$, the corresponding Sketch $\star$-metric $\widehat{\phi} _k$ preserves  the Non-negativity axiom.
 \end{lemma}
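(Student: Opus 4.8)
The plan is to show that the non-negativity axiom, namely $\phi(x \,||\, y) \geq 0$ for all distributions $x,y$, is inherited by $\widehat{\phi}_k$ directly from the definition of the Sketch $\star$-metric as a maximum over partitions. The key observation is that $\widehat{\phi}_k(p \,||\, q)$ is defined as $\max_{\rho \in \mathcal P_k(\Omega)} \phi(\widehat{p}_\rho \,||\, \widehat{q}_\rho)$, so it is a maximum of values of $\phi$ evaluated on the aggregated distributions $\widehat{p}_\rho$ and $\widehat{q}_\rho$.

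First I would verify that for each partition $\rho \in \mathcal P_k(\Omega)$, the aggregated objects $\widehat{p}_\rho$ and $\widehat{q}_\rho$ are themselves legitimate probability distributions over the $k$-element set $\rho$; this holds because $\widehat{p}_\rho(a) = \sum_{i \in a} p(i) \geq 0$ for every cell $a$, and summing over the mutually exclusive and exhaustive cells of $\rho$ gives $\sum_{a \in \rho} \widehat{p}_\rho(a) = \sum_{i \in \Omega} p(i) = 1$, and likewise for $q$. Thus $\phi(\widehat{p}_\rho \,||\, \widehat{q}_\rho)$ is an application of $\phi$ to a valid pair of distributions.

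Next, since $\phi$ verifies the non-negativity axiom \eqref{eq:non-neg} on every pair of $\Omega$-point distributions (here over the $k$-element index set $\rho$), each term $\phi(\widehat{p}_\rho \,||\, \widehat{q}_\rho) \geq 0$. A maximum of a nonempty collection of non-negative quantities is itself non-negative, so $\widehat{\phi}_k(p \,||\, q) = \max_{\rho} \phi(\widehat{p}_\rho \,||\, \widehat{q}_\rho) \geq 0$. The set $\mathcal P_k(\Omega)$ is nonempty precisely when $k \leq |\Omega|$, which is guaranteed by the hypothesis $k \in \Omega$ (i.e. $k \le n$); the boundary case $k > \Omega$ is handled separately by the convention in the Remark, where $\widehat{\phi}_k = \phi$ inherits non-negativity trivially.

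I do not expect any genuine obstacle here, as non-negativity is the most robust of the four axioms and survives both the aggregation step and the maximization step without requiring any structural property of $\phi$ beyond non-negativity itself. The only point demanding a modicum of care is confirming that the maximum is taken over a nonempty set so that the claim is not vacuous, and that the aggregated arguments remain valid distributions to which $\phi$ applies; both are immediate. The harder axioms (identity, symmetry, triangle inequality) reserved for the later lemmata will require more work, but for non-negativity the argument is a short deduction from the definition.
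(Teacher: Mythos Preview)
Your proposal is correct and follows essentially the same approach as the paper: both observe that $\widehat{\phi}_k(p\,||\,q)$ is a maximum of values $\phi(\widehat{p}_\rho\,||\,\widehat{q}_\rho)$, each of which is non-negative by hypothesis, hence the maximum is non-negative. Your version is slightly more detailed in explicitly checking that $\widehat{p}_\rho,\widehat{q}_\rho$ are valid $k$-point distributions and that $\mathcal P_k(\Omega)$ is nonempty, but the core argument is identical.
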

 
 \begin{IEEEproof}
Let $p$ and $q$ be any  two $\Omega$-point distributions.  By definition, 
 \[
 \widehat{\phi} _{k} (p || q) =  \max_{\rho \in \mathcal P_{k}(\Omega)} \phi (\widehat{p} _{\rho} || \widehat{q} _{\rho})
\]
 As for any two $k$-point distributions,  $\phi$ is positive we have $ \widehat{\phi} _{k} (p || q) \geq 0$
 that concludes the proof.
 \end{IEEEproof}

  \begin{lemma}[Identity of indiscernible]\label{lem:identity}
  Given any generalized metric $\phi$ verifying the Identity of indiscernible axiom then, for any $k\in \Omega$, the corresponding Sketch $\star$-metric $\widehat{\phi}_k$ preserves  the Identity of indiscernible axiom.
 \end{lemma}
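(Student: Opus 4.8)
The plan is to show the equivalence $\widehat{\phi}_k(p\|q)=0 \Leftrightarrow p=q$ in both directions, leveraging the fact that $\phi$ itself satisfies the identity of indiscernibles on $k$-point distributions. The key structural observation is that the definition of $\widehat{\phi}_k$ takes a \emph{maximum} over all partitions $\rho\in\mathcal P_k(\Omega)$ of the quantity $\phi(\widehat{p}_\rho\|\widehat{q}_\rho)$, where $\widehat{p}_\rho$ and $\widehat{q}_\rho$ are the aggregated distributions on the $k$ cells of $\rho$. Since each $\phi(\widehat{p}_\rho\|\widehat{q}_\rho)$ is non-negative (by Lemma~\ref{lem:non-neg}, which I may assume), the maximum is zero if and only if \emph{every} term in the maximization is zero.

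For the easy direction ($\Leftarrow$): if $p=q$, then for every partition $\rho$ the aggregated distributions coincide, $\widehat{p}_\rho(a)=\sum_{i\in a}p(i)=\sum_{i\in a}q(i)=\widehat{q}_\rho(a)$ for each cell $a\in\rho$. Since $\phi$ satisfies the identity of indiscernibles, each $\phi(\widehat{p}_\rho\|\widehat{q}_\rho)=0$, and hence their maximum $\widehat{\phi}_k(p\|q)=0$.

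For the forward direction ($\Rightarrow$): suppose $\widehat{\phi}_k(p\|q)=0$. Because every term $\phi(\widehat{p}_\rho\|\widehat{q}_\rho)\geq 0$ and the maximum over all $\rho$ equals $0$, we must have $\phi(\widehat{p}_\rho\|\widehat{q}_\rho)=0$ for \emph{every} partition $\rho\in\mathcal P_k(\Omega)$. By the identity of indiscernibles for $\phi$, this forces $\widehat{p}_\rho=\widehat{q}_\rho$ for every $\rho$, i.e.\ $\sum_{i\in a}p(i)=\sum_{i\in a}q(i)$ for every cell $a$ of every $k$-partition. The crux is then to deduce $p=q$ from these aggregate equalities. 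The main obstacle is precisely this combinatorial step: I would argue that by choosing partitions that isolate any two chosen indices $i,j$ into distinct cells (possible whenever $k\geq 2$ and $n\geq k$), the cell sums can be manipulated to recover individual coordinate differences. Concretely, for any pair of indices one can construct two partitions differing only by moving a single element between two cells, and subtract the resulting cell-sum equalities to conclude $p(i)=q(i)$ for all $i\in\Omega$; hence $p=q$. The edge case $k=n$ (where each cell is a singleton) makes the recovery immediate, and the convention for $k>\Omega$ reduces to $\phi$ directly, so those cases are handled separately.
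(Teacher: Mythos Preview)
Your proof is correct and follows essentially the same overall structure as the paper: both directions use the identity of indiscernibles for $\phi$ together with non-negativity to reduce $\widehat{\phi}_k(p\|q)=0$ to the statement $\widehat{p}_\rho=\widehat{q}_\rho$ for every $\rho\in\mathcal P_k(\Omega)$, and then recover $p=q$ from the cell-sum equalities.

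The only difference is in that final combinatorial recovery step, where you work harder than necessary. The paper simply observes that for any single index $i\in\Omega$ there exists a partition $\rho\in\mathcal P_k(\Omega)$ containing the singleton $\{i\}$ as one of its cells (this holds whenever $k\leq n$); the cell-sum equality on that cell then gives $p(i)=q(i)$ directly, with no need to compare two partitions or subtract sums. Your difference-of-two-partitions argument is valid, but the paper's singleton-cell observation is cleaner and avoids the pairwise construction entirely.
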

 
  \begin{IEEEproof}
Let $p$ be any $\Omega$-point distribution. We have 
\[\widehat{\phi} _{k} (p || p) = \max_{\rho \in \mathcal P_{k}(\Omega)} \phi (\widehat{p} _{\rho} || \widehat{p} _{\rho}) = 0,\]
due to   $\phi$  Identity of indiscernible axiom.

Consider now two $\Omega$-point distributions $p$ and $q$ such that $\widehat{\phi} _{k} (p || q) = 0$. Metric  $\phi$ verifies both the non-negativity  axiom (by construction) and the Identity of indiscernible axiom (by assumption). Thus we have
$\forall \rho \in \mathcal P_{k}(\Omega), \widehat{p}_{\rho} = \widehat{q}_{\rho},$
leading to 
\begin{equation}\label{eq:lem:iden}
\forall \rho \in \mathcal P_{k}(\Omega), \forall a \in \rho, \sum_{i\in a} p(i) = \sum_{i\in a} q(i).\end{equation}
Moreover, for any $i\in\Omega$, there exists a partition $\rho\in \mathcal P_{k}(\Omega)$ such that $\{i\}\in\rho$. By  Equation~\ref{eq:lem:iden},  $\forall i \in\Omega, p(i)=q(i)$, and so $p=q$.

Combining the two parts of the proof leads to
$ \widehat{\phi} _{k} (p || q) = 0 \Longleftrightarrow p = q,
 $
 which  concludes the proof of the Lemma.
 \end{IEEEproof}

  \begin{lemma}[Symmetry]\label{lem:symm}
  Given any generalized metric $\phi$ verifying  the Symmetry axiom then, for any $k\in\Omega$, the corresponding Sketch $\star$-metric $\widehat{\phi}_k$ preserves  the Symmetry axiom.
 \end{lemma}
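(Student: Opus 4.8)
The plan is to exploit the fact that, in Definition~\ref{def:sketch}, the maximum defining $\widehat{\phi}_k$ ranges over the same index set $\mathcal P_k(\Omega)$ irrespective of the order in which the two distributions are supplied, and that swapping $p$ and $q$ swaps the projected distributions $\widehat p_\rho$ and $\widehat q_\rho$ partition-by-partition. So the whole argument reduces to applying the symmetry of $\phi$ termwise inside the maximum.

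Concretely, I would first fix two arbitrary $\Omega$-point distributions $p$ and $q$ and a single partition $\rho \in \mathcal P_k(\Omega)$. For this $\rho$, the projected objects $\widehat p_\rho$ and $\widehat q_\rho$ are themselves $k$-point distributions (each cell mass is a sum of nonnegative $p(i)$'s, and the total mass is preserved), so they lie in the domain on which $\phi$ is assumed to satisfy the symmetry axiom. Applying that axiom gives $\phi(\widehat p_\rho \,\|\, \widehat q_\rho) = \phi(\widehat q_\rho \,\|\, \widehat p_\rho)$, and this holds for every $\rho \in \mathcal P_k(\Omega)$.

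The final step is to pass from this pointwise equality of the two objective functions to equality of their maxima. Since $\mathcal P_k(\Omega)$ is the common index set for both $\widehat\phi_k(p\,\|\,q)$ and $\widehat\phi_k(q\,\|\,p)$, and the two functions $\rho \mapsto \phi(\widehat p_\rho \,\|\, \widehat q_\rho)$ and $\rho \mapsto \phi(\widehat q_\rho \,\|\, \widehat p_\rho)$ agree at every point of this set, their maxima coincide; that is,
\[
\widehat\phi_k(p\,\|\,q) = \max_{\rho \in \mathcal P_k(\Omega)} \phi(\widehat p_\rho \,\|\, \widehat q_\rho) = \max_{\rho \in \mathcal P_k(\Omega)} \phi(\widehat q_\rho \,\|\, \widehat p_\rho) = \widehat\phi_k(q\,\|\,p).
\]
I do not expect a genuine obstacle here: symmetry is the most mechanical of the four axioms, since it involves no interaction between distinct partitions (unlike, say, the triangle inequality, where one must compare maximizers across the two terms of the bound). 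The only point requiring a word of care is the very last implication, namely that $\max_\rho g(\rho)=\max_\rho h(\rho)$ whenever $g\equiv h$ on the index set, which is immediate.
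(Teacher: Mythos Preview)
Your proof is correct and follows essentially the same idea as the paper's: both rely on applying the symmetry of $\phi$ termwise at each partition $\rho$ and on the fact that the index set $\mathcal P_k(\Omega)$ is common to both sides. The paper phrases it as two inequalities (pick a maximizer $\overline\rho$ for $\widehat\phi_k(p\|q)$, use symmetry of $\phi$ at $\overline\rho$ to bound by $\widehat\phi_k(q\|p)$, then repeat with roles swapped), whereas you argue directly that the two objective functions agree pointwise and hence have equal maxima; this is a cosmetic difference only.
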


 \begin{IEEEproof}
Let $p$ and $q$ be any two $\Omega$-point distributions.  We have
 \[
 \widehat{\phi} _{k} (p || q) =  \max_{\rho \in \mathcal P_{k}(\Omega)} \phi (\widehat{p} _{\rho} || \widehat{q} _{\rho}).
\]
 Let $\overline \rho \in \mathcal P_{k}(\Omega)$  be a $k$-cell partition such that $\phi (\widehat{p} _{\overline\rho} || \widehat{q} _{\overline\rho})=  \max_{\rho \in \mathcal P_{k}(\Omega)} \phi (\widehat{p} _{\rho} || \widehat{q} _{\rho})$. We get
 \[
 \widehat{\phi} _{k} (p || q) =  \phi (\widehat{p} _{\overline\rho} || \widehat{q} _{\overline\rho}) = \phi (\widehat{q} _{\overline\rho} || \widehat{p} _{\overline\rho}) \leq  \widehat{\phi} _{k} (q || p).
\]
 By symmetry, considering $\underline \rho \in \mathcal P_{k}(\Omega)$ such that $\phi (\widehat{q} _{\underline\rho} || \widehat{p} _{\underline\rho})=  \max_{\rho \in \mathcal P_{k}(\Omega)} \phi (\widehat{q} _{\rho} || \widehat{p} _{\rho})$, we also have $\widehat{\phi} _{k} (q || p) \leq \widehat{\phi} _{k} (p || q)$, 
 which concludes the proof.
 \end{IEEEproof}

  \begin{lemma}[Triangle inequality]\label{lem:triangle}
  Given any generalized metric $\phi$ verifying the Triangle inequality axiom then, for any $k\in \Omega$, the corresponding Sketch $\star$-metric $\widehat{\phi}_k$ preserves  the Triangle inequality axiom.
 \end{lemma}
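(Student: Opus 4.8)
The plan is to reduce the triangle inequality for $\widehat{\phi}_k$ to the triangle inequality that $\phi$ already satisfies on $k$-point distributions, the key device being that all three distributions involved can be projected through one common partition. Concretely, to establish $\widehat{\phi}_k(p || q) \leq \widehat{\phi}_k(p || r) + \widehat{\phi}_k(r || q)$ for any three $\Omega$-point distributions $p$, $q$ and $r$, I would first select, exactly as in the proof of Lemma~\ref{lem:symm}, a $k$-cell partition $\overline{\rho} \in \mathcal P_k(\Omega)$ attaining the maximum defining the left-hand side, so that $\widehat{\phi}_k(p || q) = \phi(\widehat{p}_{\overline{\rho}} || \widehat{q}_{\overline{\rho}})$.

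Next I would apply this \emph{same} partition $\overline{\rho}$ to the intermediate distribution $r$, producing the $k$-point distribution $\widehat{r}_{\overline{\rho}}$. Since $\widehat{p}_{\overline{\rho}}$, $\widehat{r}_{\overline{\rho}}$ and $\widehat{q}_{\overline{\rho}}$ are three genuine $k$-point distributions, the hypothesis that $\phi$ verifies the triangle inequality yields
\[\phi(\widehat{p}_{\overline{\rho}} || \widehat{q}_{\overline{\rho}}) \leq \phi(\widehat{p}_{\overline{\rho}} || \widehat{r}_{\overline{\rho}}) + \phi(\widehat{r}_{\overline{\rho}} || \widehat{q}_{\overline{\rho}}).\]

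Finally, each summand on the right is bounded by its corresponding maximum over all partitions: $\phi(\widehat{p}_{\overline{\rho}} || \widehat{r}_{\overline{\rho}}) \leq \max_{\rho \in \mathcal P_k(\Omega)} \phi(\widehat{p}_{\rho} || \widehat{r}_{\rho}) = \widehat{\phi}_k(p || r)$, and likewise $\phi(\widehat{r}_{\overline{\rho}} || \widehat{q}_{\overline{\rho}}) \leq \widehat{\phi}_k(r || q)$. Chaining these inequalities gives the claim.

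The step I expect to carry the conceptual weight is the commitment to a single partition $\overline{\rho}$ throughout: the maximizer for the pair $(p,q)$ need not be the maximizer for $(p,r)$ or for $(r,q)$, so one must resist the temptation to optimize each pair separately. The argument succeeds precisely because relaxing $\phi(\widehat{p}_{\overline{\rho}} || \widehat{r}_{\overline{\rho}})$ and $\phi(\widehat{r}_{\overline{\rho}} || \widehat{q}_{\overline{\rho}})$ up to their respective maxima only enlarges the right-hand side; no compatibility among the three maximizing partitions is required, which is what makes the otherwise delicate interaction between the $\max$ operator and the triangle inequality go through cleanly.
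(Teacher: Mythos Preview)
Your proposal is correct and follows essentially the same argument as the paper: pick a maximizing partition $\overline{\rho}$ for the pair $(p,q)$, apply the triangle inequality of $\phi$ to the three $k$-point distributions $\widehat{p}_{\overline{\rho}}$, $\widehat{r}_{\overline{\rho}}$, $\widehat{q}_{\overline{\rho}}$, and then relax each summand to its own maximum over partitions. Your concluding paragraph articulating why a single common partition suffices (and why no compatibility among maximizers is needed) is a nice addition that the paper's terse proof leaves implicit.
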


 \begin{IEEEproof}
Let $p$, $q$ and $r$ be any  three $\Omega$-point distributions. Let $\overline \rho \in \mathcal P_{k}(\Omega)$  be a $k$-cell partition such that $\phi (\widehat{p} _{\overline\rho} || \widehat{q} _{\overline\rho})=  \max_{\rho \in \mathcal P_{k}(\Omega)} \phi (\widehat{p} _{\rho} || \widehat{q} _{\rho})$.

 We have
 \begin{align*}
 \widehat{\phi} _{k} (p || q) ~& =  \phi (\widehat{p} _{\overline\rho} || \widehat{q} _{\overline\rho})\\
 & \leq \phi (\widehat{p} _{\overline\rho} || \overline{r}_{\overline\rho}) + \phi (\overline{r}_{\overline\rho} || \widehat{q} _{\overline\rho})\\
 & \leq \max_{\rho \in \mathcal P_{k}(\Omega)} \phi (\widehat{p} _{\rho} || \overline{r}_{\rho}) + \max_{\rho \in \mathcal P_{k}(\Omega)} \phi (\overline{r}_{\rho} || \widehat{q} _{\rho})\\
 & = \widehat{\phi} _{k} (p || r) + \widehat{\phi} _{k} (r || q)
\end{align*}
 that concludes the proof.
  \end{IEEEproof}

 \subsection{Properties preserving}

 \begin{theorem}\label{thm:fdiv}
Given a $f$-divergence $\phi$ then, for any $k\in \Omega$, the corresponding Sketch $\star$-metric $\widehat{\phi}_k$ is also a $f$-divergence.
 \end{theorem}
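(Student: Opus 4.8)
The plan is to establish that $\widehat{\phi}_k$ inherits the two defining properties of an $f$-divergence, namely the \emph{convexity} of Property~\ref{prop:convexity} and the \emph{monotonicity} of Property~\ref{prop:monotonicity}; the underlying premetric structure is already guaranteed by the non-negativity of Lemma~\ref{lem:non-neg} together with the reflexivity part ($\widehat{\phi}_k(p\|p)=0$) established inside the proof of Lemma~\ref{lem:identity}. Throughout I would exploit two structural facts: first, that the coarsening operation $p \mapsto \widehat{p}_\rho$ is \emph{linear} in $p$, since $\widehat{p}_\rho(a) = \sum_{i\in a} p(i)$; and second, that for a fixed pair $(p,q)$ the value $\widehat{\phi}_k(p\|q)$ is realised by a single maximizing partition, exactly as exploited in the proofs of Lemmata~\ref{lem:symm} and~\ref{lem:triangle}.

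For convexity I would pick the partition $\overline{\rho}$ attaining $\widehat{\phi}_k(\lambda p_1 + (1-\lambda)p_2 \,\|\, \lambda q_1 + (1-\lambda)q_2)$, use linearity of coarsening to rewrite the coarsened mixtures as mixtures of the coarsened distributions, apply the convexity of $\phi$ (Property~\ref{prop:convexity}) on the resulting $k$-cell distributions, and finally bound each of the two terms by its own maximum over $\mathcal P_k(\Omega)$. This reproduces the ``fix the maximizer, then relax to the max'' pattern and yields the inequality with the correct $\lambda$-weights.

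Monotonicity is the delicate step and I expect it to be the main obstacle. The difficulty is that $\widehat{\phi}_k$ is a maximum over \emph{deterministic} coarsenings (partitions), whereas Property~\ref{prop:monotonicity} must hold against an \emph{arbitrary} transition probability $\kappa$. Given a maximizing partition $\overline{\rho}'$ of the output space for $\widehat{\phi}_k(p_\kappa\|q_\kappa)$, I would observe that composing $\kappa$ with the coarsening induced by $\overline{\rho}'$ is again a transition probability $\kappa'$ from $\Omega$ onto $k$ cells, so that $\widehat{\phi}_k(p_\kappa\|q_\kappa) = \phi(\kappa' p\,\|\,\kappa' q)$. It then suffices to show that the maximum of $T \mapsto \phi(Tp\|Tq)$ over all transitions $T$ into $k$ cells is attained at a deterministic $T$, i.e. at a genuine partition. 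This I would derive from the joint convexity of the perspective map $(a,b)\mapsto b\,f(a/b)$ underlying the $f$-divergence: since each coordinate of $Tp$ and $Tq$ is linear in $T$, the map $T\mapsto\phi(Tp\|Tq)$ is convex on the polytope of stochastic maps into $k$ cells, hence maximized at a vertex, and every vertex is a deterministic coarsening.

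The final bookkeeping step is to match the cardinality constraint: a maximizing vertex may send $\Omega$ into fewer than $k$ cells, but refining it to exactly $k$ nonempty cells can only increase $\phi$, by the monotonicity of $\phi$ itself (coarsening decreases an $f$-divergence). Hence $\phi(\kappa' p\,\|\,\kappa' q) \leq \max_{\rho\in\mathcal P_k(\Omega)} \phi(\widehat{p}_\rho\,\|\,\widehat{q}_\rho) = \widehat{\phi}_k(p\|q)$, which is precisely the monotonicity of $\widehat{\phi}_k$. Combining the premetric axioms with convexity and monotonicity then identifies $\widehat{\phi}_k$ as an $f$-divergence.
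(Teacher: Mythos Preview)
Your route differs from the paper's in two visible ways. The paper's proof of this theorem is very short: it invokes Theorem~\ref{thm:metric} for the metric axioms, Lemma~\ref{lem:monotonicity} for monotonicity, and then appeals to Csisz\'ar's external characterization that $f$-divergences are the unique class of \emph{decomposable} monotonic information divergences. It does not use convexity at all in this particular proof (convexity is established separately as Lemma~\ref{lem:convexity}, with exactly the ``fix the maximizer, then relax'' argument you describe, but it is not invoked here). By contrast you work more from first principles, verifying both convexity and monotonicity directly and treating their conjunction as the defining feature of an $f$-divergence.

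Your monotonicity argument is genuinely stronger than the paper's Lemma~\ref{lem:monotonicity}: the paper only checks monotonicity against deterministic coarsenings $\mu\in\mathcal P_c(\Omega)$, whereas you handle an arbitrary transition probability $\kappa$ via the convexity of $T\mapsto\phi(Tp\,\|\,Tq)$ on the polytope of row-stochastic maps, pushing the maximum to a deterministic vertex. That is a nice self-contained strengthening and it does align with what Property~\ref{prop:monotonicity} actually asks for.

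Where your proposal is thinner than the paper is the closing identification step. You write that premetric $+$ convexity $+$ monotonicity ``identifies $\widehat{\phi}_k$ as an $f$-divergence,'' but Definition~\ref{def:f-div} asks for a specific convex $f$ with $f(1)=0$ realizing the divergence as a sum, and the standard characterization the paper cites (Csisz\'ar) hinges on monotonicity together with \emph{decomposability}, not on convexity. So to finish cleanly along your route you should either name the characterization you are invoking, or argue why $\widehat{\phi}_k$ inherits a decomposable form; the paper sidesteps this by citing the external result directly.
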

 
 \begin{IEEEproof}
 From Theorem~\ref{thm:metric}, $\widehat{\phi}_k$ preserves the axioms of the generalized metric. Thus, $\widehat{\phi}_k$ and $\phi$ are in the same equivalence class. Moreover, from Lemma~\ref{lem:monotonicity}, $\widehat{\phi}_k$ verifies the monotonicity property. Thus, as the $f$-divergence is the only class of decomposable information \emph{monotonic} divergences (\emph{cf.}~\cite{Csiszar1978}), $\widehat{\phi}_k$ is also a  $f$-divergence.
 \end{IEEEproof}
 
 \begin{theorem}\label{thm:bregman}
Given a Bregman divergence $\phi$ then, for any $k\in \Omega$, the corresponding Sketch $\star$-metric $\widehat{\phi}_k$ is also a Bregman divergence.
 \end{theorem}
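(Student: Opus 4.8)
The plan is to mirror the argument used for Theorem~\ref{thm:fdiv}: rather than exhibiting an explicit generator for $\widehat{\phi}_k$, I would isolate the structural property that singles out Bregman divergences among generalized metrics and show that the $\max$ over partitions preserves it. Concretely, I would rely on a characterization of the (decomposable) Bregman class as the divergences that are \emph{linear} in their convex generator, \ie those satisfying Property~\ref{prop:linearity}. The skeleton then has three steps: first, invoke Theorem~\ref{thm:metric} to place $\widehat{\phi}_k$ in the same equivalence class as $\phi$ (non-negativity, and identity of indiscernibles when $\phi$ has it); second, prove a dedicated lemma that $\widehat{\phi}_k$ inherits the linearity of $\phi$ in the generator, in analogy with the role of Lemma~\ref{lem:monotonicity} in Theorem~\ref{thm:fdiv}; third, conclude via the characterization that any divergence satisfying these axioms together with linearity is a Bregman divergence.

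The first step is immediate. The clean structural observation that makes every fixed partition a genuine Bregman divergence is that $\phi(\widehat{p}_\rho\,||\,\widehat{q}_\rho) = \mathcal B_{F\circ A_\rho}(p\,||\,q)$, where $A_\rho$ denotes the linear aggregation map $p\mapsto\widehat{p}_\rho$; this follows from Definition~\ref{def:bregman} and the chain rule $\nabla(F\circ A_\rho)=A_\rho^{\!\top}\nabla F$. With this in hand, convexity in the first argument transports easily: since aggregation is linear, $\widehat{(\lambda p_1+(1-\lambda)p_2)}_\rho=\lambda\widehat{p_1}_\rho+(1-\lambda)\widehat{p_2}_\rho$, so the convexity of $\phi$ holds cellwise for each fixed $\rho$, and the subadditivity of the $\max$, namely $\max_\rho(A_\rho+B_\rho)\le\max_\rho A_\rho+\max_\rho B_\rho$, carries it to $\widehat{\phi}_k$.

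The hard part will be the linearity lemma. Property~\ref{prop:linearity} gives, for each fixed $\rho$, the identity $\mathcal B_{F_1+\lambda F_2}(\widehat{p}_\rho\,||\,\widehat{q}_\rho)=\mathcal B_{F_1}(\widehat{p}_\rho\,||\,\widehat{q}_\rho)+\lambda\,\mathcal B_{F_2}(\widehat{p}_\rho\,||\,\widehat{q}_\rho)$, but the $\max$ over $\rho$ does \emph{not} commute with this sum unless a single partition simultaneously maximizes both summands. Exhibiting such a common maximizer is exactly where the proof must do real work, and it is the point where the Bregman case departs from the $f$-divergence case: there the maximizer is pinned down by the data-processing monotonicity of Lemma~\ref{lem:monotonicity}, so a canonical, finest partition dominates, whereas a general Bregman divergence obeys no data-processing inequality (the KL divergence being the only member that is simultaneously an $f$-divergence), and the optimal partition depends jointly on $F$, $p$ and $q$. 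I therefore expect the decisive step to be a structural argument showing that the partition attaining $\widehat{\phi}_k(p\,||\,q)$ can be chosen independently of the linear decomposition of the generator; once that is secured the linearity equality holds and the characterization theorem closes the proof.
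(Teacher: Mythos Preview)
Your plan diverges from the paper's in its choice of characterizing property. You propose to single out Bregman divergences by the linearity-in-the-generator property (Property~\ref{prop:linearity}) and then argue that $\widehat{\phi}_k$ inherits it. The paper instead invokes the \emph{transitivity} characterization due to Csisz\'ar: a divergence is Bregman iff it satisfies the Generalized Pythagorean identity $\phi(p\,||\,q)+\phi(q\,||\,r)=\phi(p\,||\,r)$ whenever $q$ is the Bregman projection of $r$ onto a suitable set $L\ni p$. The linearity lemma (Lemma~\ref{lem:linearity}) appears in the paper's argument only as a reinforcing remark, not as the load-bearing characterization.

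You correctly isolate the real obstruction: $\max_\rho(A_\rho+B_\rho)$ is in general strictly smaller than $\max_\rho A_\rho+\max_\rho B_\rho$, so one needs a \emph{common} maximizing partition. Your proposal leaves this step open, merely anticipating ``a structural argument showing that the partition \ldots\ can be chosen independently of the linear decomposition.'' That is precisely the gap. The paper's route addresses the analogous max-splitting step not through linearity at all, but by appealing to the dually flat affine structure on the probability simplex and the projection theorem (Amari, Csisz\'ar): because $q=\Pi(L\mid r)$, the Pythagorean relation holds cellwise and the projection-compatible geometry is what is claimed to force the maxima to align. Whether or not one finds that step fully convincing, it is the mechanism the paper uses, and it is absent from your outline. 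A second, smaller issue with your plan is that linearity in the generator is not, to my knowledge, by itself a \emph{characterization} of the Bregman class in the way monotonicity characterizes $f$-divergences; you would need to cite or prove such a result before the final step goes through.
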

 
 \begin{IEEEproof}
 From Theorem~\ref{thm:metric}, $\widehat{\phi}_k$ preserves the axioms of the generalized metric. Thus, $\widehat{\phi}_k$ and $\phi$ are in the same equivalence class. Moreover, the  Bregman divergence is characterized by the property of transitivity  (\emph{cf.}~\cite{Csiszar91}) defined as follows.  Given $p$, $q$ and $r$ three  $\Omega$-point distributions such that $q = \Pi(L|r)$ and $p\in L$, with $\Pi$ is a selection rule according to the definition of Csisz\'ar in~\cite{Csiszar91} and $L$ is a subset of the $\Omega$-point distributions, we have the Generalized Pythagorean Theorem:
 \[
 \phi (p || q) + \phi (q || r) = \phi (p || r).
 \]
 Moreover the authors in~\cite{Amari10} show that the set $S_n$ of all discrete probability distributions over $n$ elements $(X = \{x_1, \ldots , x_n\})$ is a Riemannian manifold, and it owns another different dually flat  affine structure. They also show that these dual structures give rise to the generalized Pythagorean theorem. This is verified for the coordinates in $S_n$ and for the dual coordinates~\cite{Amari10}. Combining these results with the projection theorem~\cite{Csiszar91,Amari10}, we obtain that
\begin{align*}
 \widehat{\phi}_k (p || r) & = \max _{\rho \in \mathcal P_{k}(n)} \phi (\widehat{p} _{\rho} || \widehat{r}_{\rho})\\
&  = \max _{\rho \in \mathcal P_{k}(n)} \left(\phi (\widehat{p} _{\rho} || \widehat{q} _{\rho}) + \phi (\widehat{q} _{\rho} || \widehat{r}_{\rho})\right)\\
&  = \max _{\rho \in \mathcal P_{k}(n)} \phi (\widehat{p} _{\rho} || \widehat{q} _{\rho}) + \max _{\rho \in \mathcal P_{k}(n)} \phi (\widehat{q} _{\rho} || \widehat{r}_{\rho})\\
&  =  \widehat{\phi}_k (p || q) +  \widehat{\phi}_k (q || r)
\end{align*}
Finally, by  the characterization of Bregman divergence through transitivity~\cite{Csiszar91}, and reinforced with Lemma~\ref{lem:linearity} statement, $\widehat{\phi}_k$ is also a Bregman divergence.
 \end{IEEEproof}

  In the following, we show that the \emph{Sketch $\star$-metric} preserves the properties of divergences.

   \begin{lemma}[Monotonicity]\label{lem:monotonicity}
  Given any generalized metric $\phi$  verifying the Monotonicity property then, for any $k\in\Omega$, the corresponding Sketch $\star$-metric $\widehat{\phi}_k$ preserves  the Monotonicity property.
 \end{lemma}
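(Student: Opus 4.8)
The plan is to establish the data-processing inequality $\widehat{\phi}_k(p_\kappa \,\|\, q_\kappa) \le \widehat{\phi}_k(p \,\|\, q)$ for every transition probability $\kappa$, by exploiting two structural facts: the aggregation map $p \mapsto \widehat{p}_\rho$ induced by a partition $\rho \in \mathcal P_{k}(\Omega)$ is itself a (deterministic) transition probability, and a composition of transition probabilities is again a transition probability. Since $\phi$ satisfies monotonicity (Property~\ref{prop:monotonicity}), any such composition can only shrink $\phi$, and this contraction is the engine of the whole argument. In particular it already gives $\widehat{\phi}_k(p \,\|\, q) \le \phi(p\,\|\,q)$, since each $\widehat{p}_\rho$ is an aggregation of $p$.

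First I would fix the partition $\rho^\star \in \mathcal P_{k}(\Omega')$ of the output alphabet $\Omega'$ that attains the maximum defining $\widehat{\phi}_k(p_\kappa \,\|\, q_\kappa)$, so that $\widehat{\phi}_k(p_\kappa \,\|\, q_\kappa) = \phi\big((\widehat{p_\kappa})_{\rho^\star} \,\|\, (\widehat{q_\kappa})_{\rho^\star}\big)$. The composite map $T$ that first applies $\kappa$ and then aggregates by $\rho^\star$ is a transition probability from $\Omega$ onto a $k$-cell space, and it sends $p,q$ to $(\widehat{p_\kappa})_{\rho^\star}, (\widehat{q_\kappa})_{\rho^\star}$. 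It then remains to dominate $\phi(Tp \,\|\, Tq)$ by $\widehat{\phi}_k(p\,\|\,q) = \max_{\rho \in \mathcal P_{k}(\Omega)} \phi(\widehat{p}_\rho \,\|\, \widehat{q}_\rho)$. When $\kappa$ is deterministic, $T$ is a deterministic map $\Omega \to [k]$, whose fibres form a partition $\rho$ of $\Omega$ with $Tp = \widehat{p}_\rho$ and $Tq = \widehat{q}_\rho$ \emph{exactly}; if $\rho$ has fewer than $k$ nonempty cells I would refine it to exactly $k$ cells, which only increases $\phi$ (the coarse aggregation factors through the refinement, so monotonicity applies), and conclude $\phi(Tp\,\|\,Tq) \le \widehat{\phi}_k(p\,\|\,q)$.

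The hard part is the genuinely stochastic case, where $T$ is a probabilistic channel to $k$ symbols and no single partition of $\Omega$ reproduces it. Here I would invoke the optimality of deterministic quantizers for monotone (\ie $f$-type) divergences: among all transition probabilities onto a $k$-element space, the divergence between the images is maximized by a deterministic quantizer, indeed by one whose cells are likelihood-ratio intervals. Granting this, $\phi(Tp\,\|\,Tq)$ is bounded by the divergence of the best deterministic $k$-cell partition of $\Omega$, which is precisely $\widehat{\phi}_k(p\,\|\,q)$, completing the proof. I expect this quantization-optimality step to be the main obstacle, since it is the only place where the deterministic structure of the $\max$ over $\mathcal P_{k}(\Omega)$ must be reconciled with the arbitrary stochastic kernel allowed in Property~\ref{prop:monotonicity}.
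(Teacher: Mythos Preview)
Your reading of the monotonicity to be preserved is broader than what the paper actually establishes. The paper's proof restricts attention to transitions that are deterministic coarsenings, \ie partitions $\mu\in\mathcal P_c(\Omega)$, and shows $\widehat\phi_k(p\,\|\,q)\ge\widehat\phi_k(\widehat p_\mu\,\|\,\widehat q_\mu)$ by a pure subset argument: any $k$-cell partition of the $c$-element quotient corresponds to a $k$-cell partition of $\Omega$ compatible with $\mu$, and the maximum over this subset of $\mathcal P_k(\Omega)$ cannot exceed the maximum over the whole set (with a separate refinement step when $c<k$, using monotonicity of $\phi$ to pass from a $k$-cell refinement down to $\mu$). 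Your deterministic-$\kappa$ paragraph is exactly this same idea, so on the portion the paper actually proves, the two arguments coincide.

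Where you go further is the genuinely stochastic $\kappa$, which the paper simply does not treat. Your appeal to the optimality of deterministic quantizers is the right instinct and is classical for $f$-divergences, but the lemma as stated assumes only that $\phi$ is monotone, not that it is an $f$-divergence, and quantizer optimality does not follow from monotonicity alone. So either you narrow the hypothesis to $f$-divergences---which suffices for the lemma's only application, in Theorem~\ref{thm:fdiv}---and cite the standard result, or you fall back to the partition-only statement the paper proves. In short: your deterministic argument is sound and matches the paper; the stochastic extension is a genuine addition that buys the full Property~\ref{prop:monotonicity} at the price of an external fact whose hypotheses you still need to verify in this generality.
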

 
  \begin{IEEEproof}
Let $p$ and $q$ be any two $\Omega$-point distributions.  Given $c< N$, consider a partition $\mu\in \mathcal P_{c}(\Omega)$. As $\phi$ is monotonic, we have $\phi (p || q) \geq \phi (\widehat{p} _{\mu} || \widehat{q} _{\mu})$~\cite{Amari09}. We  split the proof into two cases:

Case (1). Suppose that $c \geq k$. Computing $\widehat{\phi} _{k} (\widehat{p} _{\mu} || \widehat{q} _{\mu})$ amounts in considering only  the $k$-cell partitions $\rho\in \mathcal P_{k}(\Omega)$ that verify
\[\forall b\in\mu, \exists a\in \rho, b\subseteq a.\]
These partitions form a subset of $\mathcal P_{k}(\Omega)$. The maximal value of $\phi (\widehat{p} _{\rho} || \widehat{q} _{\rho})$ over this subset cannot be greater than the maximal value over the whole $\mathcal P_{k}(\Omega)$. Thus we have
 \[
 \widehat{\phi} _{k} (p || q) =  \max_{\rho \in \mathcal P_{k}(\Omega)} \phi (\widehat{p} _{\rho} || \widehat{q} _{\rho}) \geq \widehat{\phi} _{k} (\widehat{p} _{\mu} || \widehat{q} _{\mu}).
\]

Case (2).  Suppose now that $c < k$. By definition, we have $\widehat{\phi} _{k}  (\widehat{p} _{\mu} || \widehat{q} _{\mu}) =  \phi (\widehat{p} _{\mu} || \widehat{q} _{\mu})$. Consider $\rho' \in \mathcal P_{k}(\Omega)$ such that $\forall b\in\mu, \exists a\in \rho', b\subseteq a$. It then exists a transition probability that respectively transforms $\widehat{p} _{\rho'}$ and $\widehat{q} _{\rho'}$ into $\widehat{p} _{\mu}$ and $\widehat{q} _{\mu}$.  As $\phi$ is monotonic, we have 
 \begin{align*}
\widehat{\phi} _{k} (p || q) & = \max_{\rho \in \mathcal P_{k}(\Omega)} \phi (\widehat{p} _{\rho} || \widehat{q} _{\rho}) \\
& \geq \phi (\widehat{p} _{\rho'} || \widehat{q} _{\rho'}) \\
& \geq \phi (\widehat{p} _{\mu} || \widehat{q} _{\mu}) = \widehat{\phi} _{k}  (\widehat{p} _{\mu} || \widehat{q} _{\mu}).
\end{align*}

Finally for any value of $c$, $\widehat{\phi} _{k}$ guarantees the monotonicity property. This concludes the proof.
 \end{IEEEproof}

  \begin{lemma}[Convexity]\label{lem:convexity}
  Given any generalized metric $\phi$ verifying  the Convexity property then, for any $k\in \Omega$, the corresponding Sketch $\star$-metric $\widehat{\phi}_k$ preserves  the Convexity property.
 \end{lemma}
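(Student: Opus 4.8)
The plan is to follow the template of the triangle-inequality proof (Lemma~\ref{lem:triangle}): evaluate the left-hand side at a single maximizing partition, use a structural property of the sketching operation to split that one term, and then relax each resulting piece to its own maximum over $\mathcal P_{k}(\Omega)$. The structural property I would exploit is that, for every fixed partition $\rho$, the sketching map $p\mapsto\widehat{p}_{\rho}$ is \emph{linear}. Indeed, for any cell $a\in\rho$ and any $\lambda\in[0,1]$ we have $\sum_{i\in a}\left(\lambda p_1(i)+(1-\lambda)p_2(i)\right)=\lambda\sum_{i\in a}p_1(i)+(1-\lambda)\sum_{i\in a}p_2(i)$, so the sketch of a mixture is exactly the mixture of the two sketches. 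This is the observation that lets the outer convex combination commute with the inner coarsening, and it is the only nontrivial ingredient of the argument.

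Concretely, write $P=\lambda p_1+(1-\lambda)p_2$ and $Q=\lambda q_1+(1-\lambda)q_2$, and let $\overline\rho\in\mathcal P_{k}(\Omega)$ attain the maximum defining $\widehat{\phi}_{k}(P || Q)$, so that $\widehat{\phi}_{k}(P || Q)=\phi(\widehat{P}_{\overline\rho} || \widehat{Q}_{\overline\rho})$. By the linearity just noted, $\widehat{P}_{\overline\rho}=\lambda\widehat{(p_1)}_{\overline\rho}+(1-\lambda)\widehat{(p_2)}_{\overline\rho}$ and likewise $\widehat{Q}_{\overline\rho}=\lambda\widehat{(q_1)}_{\overline\rho}+(1-\lambda)\widehat{(q_2)}_{\overline\rho}$. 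Since these are $k$-point distributions, Property~\ref{prop:convexity} applied at $\overline\rho$ yields $\phi(\widehat{P}_{\overline\rho} || \widehat{Q}_{\overline\rho})\le\lambda\,\phi(\widehat{(p_1)}_{\overline\rho} || \widehat{(q_1)}_{\overline\rho})+(1-\lambda)\,\phi(\widehat{(p_2)}_{\overline\rho} || \widehat{(q_2)}_{\overline\rho})$.

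To finish, I would bound each of the two terms on the right by its own maximum over the partitions, namely $\phi(\widehat{(p_1)}_{\overline\rho} || \widehat{(q_1)}_{\overline\rho})\le\widehat{\phi}_{k}(p_1 || q_1)$ and $\phi(\widehat{(p_2)}_{\overline\rho} || \widehat{(q_2)}_{\overline\rho})\le\widehat{\phi}_{k}(p_2 || q_2)$, which chains into $\widehat{\phi}_{k}(P || Q)\le\lambda\,\widehat{\phi}_{k}(p_1 || q_1)+(1-\lambda)\,\widehat{\phi}_{k}(p_2 || q_2)$, as required. I do not expect a genuine obstacle. The only point needing care is the same asymmetry that appears in the triangle-inequality argument: the single partition $\overline\rho$ is forced on all four sketches before each term is separately relaxed to its own maximum, and one should check that the mixed sketches $\widehat{P}_{\overline\rho}$ and $\widehat{Q}_{\overline\rho}$ remain bona fide $k$-point distributions so that Property~\ref{prop:convexity} applies verbatim.
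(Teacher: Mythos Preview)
Your proof is correct and follows essentially the same route as the paper's own argument: pick a maximizing partition, use linearity of the sketching map $p\mapsto\widehat{p}_{\rho}$ to rewrite the sketched mixture as a mixture of sketches, apply the convexity of $\phi$ at that single partition, and then relax each term to its own maximum over $\mathcal P_{k}(\Omega)$. The only difference is cosmetic: you make the linearity step explicit, whereas the paper absorbs it silently into the first displayed equality.
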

 
  \begin{IEEEproof}
Let $p_1$, $p_2$, $q_1$ and $q_2$ be any four $\Omega$-point distributions. Given any $\lambda\in [0,1]$, we have:
 \begin{align*}
 & \widehat{\phi} _{k}\left(\lambda p_1 + (1-\lambda)p_2 || \lambda q_1 + (1-\lambda) q_2\right) \\
 & =  \max_{\rho \in \mathcal P_{k}(\Omega)} \phi \left(\lambda \widehat{p_1}_{\rho} + (1-\lambda)\widehat{p_2}_{\rho} || \lambda \widehat{q_1}_{\rho}+ (1-\lambda) \widehat{q_2}_{\rho}\right)
\end{align*}
Let $\rho' \in \mathcal P_{k}(\Omega)$ such that 
\begin{align*}
& \phi \left(\lambda \widehat{p_1}_{\rho'} + (1-\lambda)\widehat{p_2}_{\rho'} || \lambda \widehat{q_1}_{\rho'} + (1-\lambda) \widehat{q_2}_{\rho'}\right) \\
 & = \max_{\rho \in \mathcal P_{k}(\Omega)} \phi \left(\lambda \widehat{p_1}_{\rho} + (1-\lambda)\widehat{p_2}_{\rho} || \lambda \widehat{q_1}_{\rho} + (1-\lambda) \widehat{q_2}_{\rho}\right).
 \end{align*}
As $\phi$ verifies the Convex property, we have:
  \begin{align*}
 & \widehat{\phi} _{k}\left(\lambda p_1 + (1-\lambda)p_2 || \lambda q_1 + (1-\lambda) q_2\right) \\
 & = \phi \left(\lambda \widehat{p_1}_{\rho'} + (1-\lambda)\widehat{p_2}_{\rho'} || \lambda \widehat{q_1}_{\rho'} + (1-\lambda) \widehat{q_2}_{\rho'}\right) \\
 & \leq \lambda \phi (\widehat{p_1}_{\rho'} || \widehat{q_1}_{\rho'}) + (1-\lambda) \phi(\widehat{p_2}_{\rho'} || \widehat{q_2}_{\rho'})\\
&\leq \lambda\left(\max_{\rho \in \mathcal P_{k}(\Omega)} \phi (\widehat{p_1}_{\rho} || \widehat{q_1}_{\rho})\right) + (1-\lambda) \left(\max_{\rho \in \mathcal P_{k}(\Omega)} \phi(\widehat{p_2}_{\rho} || \widehat{q_2}_{\rho})\right)\\
& = \lambda \widehat{\phi} _{k} ({p_1} || {q_1}) + (1-\lambda) \widehat{\phi} _{k}({p_2} || {q_2})
 \end{align*}
that concludes the proof. 
 \end{IEEEproof}

 \begin{lemma}[Linearity]\label{lem:linearity}
The Sketch $\star$-metric definition preserves the Linearity property.
 \end{lemma}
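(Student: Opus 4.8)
The plan is to reduce the statement to the pointwise linearity of Property~\ref{prop:linearity} applied partition-by-partition, and then to deal with the one remaining difficulty: the interaction between that linearity and the maximization over $\mathcal P_k(\Omega)$ built into the \emph{Sketch $\star$-metric}. Writing $\widehat{{\mathcal B}_{F}}_{k}$ for the Sketch $\star$-metric associated with the Bregman divergence ${\mathcal B}_F$, the goal is to establish, for any two $\Omega$-point distributions $p,q$, any $\lambda\in[0,1]$, and any $k\in\Omega$,
\[\widehat{{\mathcal B}_{F_1+\lambda F_2}}_{k}(p||q) = \widehat{{\mathcal B}_{F_1}}_{k}(p||q) + \lambda\,\widehat{{\mathcal B}_{F_2}}_{k}(p||q).\]

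First I would fix an arbitrary $\rho\in\mathcal P_k(\Omega)$ and observe that $\widehat{p}_\rho$ and $\widehat{q}_\rho$ are genuine $k$-point distributions, so Property~\ref{prop:linearity} applies to them verbatim and yields
\[{\mathcal B}_{F_1+\lambda F_2}(\widehat{p}_\rho || \widehat{q}_\rho) = {\mathcal B}_{F_1}(\widehat{p}_\rho || \widehat{q}_\rho) + \lambda\,{\mathcal B}_{F_2}(\widehat{p}_\rho || \widehat{q}_\rho)\]
for every cell structure $\rho$. Taking the maximum over $\rho\in\mathcal P_k(\Omega)$ of the left-hand side reproduces $\widehat{{\mathcal B}_{F_1+\lambda F_2}}_{k}(p||q)$ by definition, and choosing $\rho^{*}$ to be a maximizer of the combined divergence immediately gives
\[\widehat{{\mathcal B}_{F_1+\lambda F_2}}_{k}(p||q) \leq \widehat{{\mathcal B}_{F_1}}_{k}(p||q) + \lambda\,\widehat{{\mathcal B}_{F_2}}_{k}(p||q),\]
since each summand evaluated at $\rho^{*}$ is bounded by its own maximum over $\mathcal P_k(\Omega)$.

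The hard part is the reverse inequality, because the maximum of a sum is in general strictly smaller than the sum of the maxima; equality requires exhibiting a single partition $\rho^{*}$ that simultaneously maximizes the three quantities ${\mathcal B}_{F_1}(\widehat{p}_\rho||\widehat{q}_\rho)$, ${\mathcal B}_{F_2}(\widehat{p}_\rho||\widehat{q}_\rho)$ and their $\lambda$-combination. To secure such a common maximizer I would appeal directly to the dually flat geometry of the simplex $S_n$, invoking the projection theorem and the Generalized Pythagorean relation of Csisz\'ar and Amari~\cite{Csiszar91,Amari10} (the same external facts used in the proof of Theorem~\ref{thm:bregman}, so no circularity arises): the partition that extremizes a decomposable Bregman divergence is selected by the projection onto the affine sub-manifold determined by $\rho$, a choice governed by the coarsening geometry rather than by the particular convex generator. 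If that extremal partition is generator-independent, then the same $\rho^{*}$ attains all three maxima, the max distributes across the pointwise identity, and equality follows. I expect this common-maximizer claim to be the genuine obstacle; should it resist a fully general argument, the natural fallbacks are to assume uniqueness of the maximizing partition, or to note that the identity is trivially exact in the boundary regime $k=\Omega$, where $\mathcal P_k(\Omega)$ collapses to a single partition and the statement reduces to Property~\ref{prop:linearity} itself.
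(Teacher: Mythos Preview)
Your forward inequality is exactly the paper's: apply Property~\ref{prop:linearity} at each fixed $\rho$, then bound the maximum of the sum by the sum of the maxima.

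For the reverse inequality the paper does \emph{not} attempt to produce a common maximizer via the Amari--Csisz\'ar projection geometry. It argues directly from the strict convexity of $F_1$ and $F_2$: invoking Jensen's inequality it asserts, without further partition-level analysis,
\[
\widehat{\mathcal B}_{{F_1}_{k}}(p||q) + \lambda\,\widehat{\mathcal B}_{{F_2}_{k}}(p||q)
\;\leq\;
\max_{\rho\in\mathcal P_k(\Omega)}\Bigl({\mathcal B}_{F_1}(\widehat p_\rho||\widehat q_\rho)+\lambda\,{\mathcal B}_{F_2}(\widehat p_\rho||\widehat q_\rho)\Bigr),
\]
which is then identified with $\widehat{\mathcal B}_{{F_1+\lambda F_2}_{k}}(p||q)$. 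The argument is brief and rests on convexity of the generators rather than on any property of the maximizing partition.

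Your route has the gap you already flagged: the projection theorem in~\cite{Csiszar91,Amari10} projects \emph{with respect to a fixed Bregman divergence}, and different generators $F$ induce different dually flat structures on $S_n$. Nothing in that machinery forces the extremal coarsening for ${\mathcal B}_{F_1}$ to coincide with that for ${\mathcal B}_{F_2}$; the ``generator-independence'' claim is precisely the statement you need to prove, not a consequence of the cited results. So the common-maximizer strategy does not close as written, and the paper's path---appealing to convexity and Jensen rather than to partition geometry---is the intended mechanism for the reverse bound.
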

 
  \begin{IEEEproof}
  Let $F_1$ and $F_2$ be  two strictly convex and differentiable functions, and any $\lambda\in [0,1]$.
  Consider the three Bregman divergences generated respectively from $F_1$, $F_2$ and $F_1+\lambda F_2$.
  
  Let $p$ and $q$ be two $\Omega$-point distributions.  We have:
 \begin{align*}
 \widehat{\mathcal B}_{{F_1+\lambda F_2}_{k}} (p || q) & =  \max_{\rho \in \mathcal P_{k}(\Omega)} {\mathcal B}_{F_1+\lambda F_2} (\widehat{p} _{\rho} || \widehat{q} _{\rho})\\
 & = \max_{\rho \in \mathcal P_{k}(n)} \left({\mathcal B}_{F_1} (\widehat{p} _{\rho} || \widehat{q} _{\rho}) + \lambda {\mathcal B}_{F_2} (\widehat{p} _{\rho} || \widehat{q} _{\rho})\right)\\
 & \leq  \widehat{\mathcal B}_{{F_1}_{k}} (p || q) +\lambda  \widehat{\mathcal B}_{{F_2}_{k}} (p || q)
\end{align*}

As $F_1$ and $F_2$ two strictly convex functions, and taken a leaf out of the Jensen's inequality,  we have:
 \begin{align*}
& \widehat{\mathcal B}_{{F_1}_{k}} (p || q) +\lambda \widehat{\mathcal B}_{{F_2}_{k}} (p || q) \\
& \leq \max_{\rho \in \mathcal P_{k}(\Omega)} \left({\mathcal B}_{F_1} (\widehat{p} _{\rho} || \widehat{q} _{\rho}) + \lambda {\mathcal B}_{F_2} (\widehat{p} _{\rho} || \widehat{q} _{\rho})\right)\\
 & =  \widehat{\mathcal B}_{{F_1+\lambda F_2}_{k}} (p || q) 
\end{align*}
that conclude the proof.
 \end{IEEEproof}

 This concludes the proof that the \emph{Sketch $\star$-metric} preserves all the axioms of a metric as well as the properties of $f$-divergences and Bregman divergences. We now show how to efficiently implement such a metric.


\section{Approximation algorithm}\label{toc:algo}

In this section, we propose an algorithm that computes the \emph{Sketch $\star$-metric} in one pass on the stream. 
By definition of the metric (\emph{cf.}, Definition~\ref{def:sketch}), we need to generate all the possible $k$-cell partitions. The number of these partitions follows the Stirling numbers of the second kind, which is equal to
$S(n,k) = \frac{1}{k!}\sum_{j=0}^k (-1)^{k-j}\binom {k} {j}j^n,$ where $n$ is the size of the items  universe.
Therefore, $S(n,k)$ grows exponentially with $n$. Unfortunately,  $n$ is very large. As the generating  function of $S(n,k)$ is equivalent to $x^n$, it is unreasonable in term of space complexity. We show in the following that generating $t=\lceil \log (1/\delta) \rceil$ random  $k$-cell partitions, where $\delta$ is the probability of error of our randomized algorithm, is sufficient to guarantee good overall performance of our metric.


Our algorithm is inspired from the Count-Min Sketch algorithm proposed in~\cite{CM05} by Cormode and Muthukrishnan. Specifically, the Count-Min  algorithm is an  ($\varepsilon,\delta$)-approxi\-mation  algorithm that solves the \emph{frequency-estimation} problem.  For any items in the input stream  $\sigma$, the algorithm  outputs an estimation $\hat{f_{v}}$ of the frequency of  item $v$ such that $\mathbb P\{|\hat{f_{v}} - f_{v} |> \varepsilon f_{v} \} < \delta$, where $\varepsilon, \delta >0$ are given as parameters of the algorithm.   The estimation is computed by maintaining a two-dimensional array $C$ of $t \times k$ counters, and by using $t$ 2-universal hash functions $h_{i}$ ($1 \leq i \leq t$), where $k=2/\varepsilon$ and $t = \lceil \log(1/\delta) \rceil$. Each time an item $v$ is read from the input stream, this causes one counter of each line to be incremented, \emph{i.e.},  $C[h_{i}(v)]$ is incremented by one for each $i \in [1..t]$.  

To compute the \emph{Sketch $\star$-metric} of two streams $\sigma_{1}$ and $\sigma_{2}$, two sketches $\widehat{\sigma}_{1}$ and $\widehat{\sigma}_{2}$  of these streams are constructed according to the  above description. Note that there is no particular assumption on the  length of both streams $\sigma_{1}$ and $\sigma_{2}$. That is  their respective length is finite but unknown. 
By construction of the 2-universal hash functions $h_{i}$ ($1 \leq i \leq t$), each line of  $C_{\sigma_{1}}$ and $C_{\sigma_{1}}$ corresponds to one  partition $\rho_{i}$ of the $N$-point empirical distributions of both  $\sigma_{1}$ and $\sigma_{2}$.  Thus when a query is issued to compute the given  distance $\phi$ between these two streams, the maximal value over all the $t$ partitions $\rho_{i}$ of the distance $\phi$  between  $\widehat{\sigma}_{1_{\rho_{i}}}$ and $\widehat{\sigma}_{2_{\rho_{i}}}$ is returned, \emph{i.e.}, the distance $\phi$ applied to the $i^{th}$ lines of $C_{\sigma_{1}}$ and $C_{\sigma_{1}}$ for $1\leq i \leq t$. Figure~\ref{algo:sketch} presents the pseudo-code of our algorithm.

\begin{algorithm}[t]
\caption{Sketch $\star$-metric algorithm\label{algo:sketch}}
\KwIn{Two input streams $\sigma_{1}$ and $\sigma_{2}$;  the distance $\phi$, $k$ and $t$ settings\;}
\KwOut{The distance $\overline{\phi}$ between  ${\sigma}_{1}$ and ${\sigma}_{2}$}
Choose $t$ functions $h: [n] \rightarrow [k]$, each from a 2-universal hash function family\;
$C_{\sigma_{1}}[1...t][1...k] \leftarrow 0$\;
$C_{\sigma_{2}}[1...t][1...k] \leftarrow 0$\;

\For{$a_{j}\in \sigma_{1}$}{
	$v=a_{j}$\;
	\For{$i=1$ \KwTo $t$}{
		$C_{\sigma_{1}}[i][h_{i}(v)] \leftarrow C_{\sigma_{1}}[i][h_{i}(v)]+1$\;
	}
	}
\For{$a_{j}\in \sigma_{2}$}{
	$w=a_{j}$\;
	\For{$i=1$ \KwTo $t$}{
		$C_{\sigma_{2}}[i][h_{i}(w)] \leftarrow C_{\sigma_{2}}[i][h_{i}(w)]+1$\;
	}
	}	
On query $\overline{\phi}_k(\sigma_{1} || \sigma_{2})$ \Return $\overline{\phi}= \rm{max}_{1\leq i \leq t} \phi(C_{\sigma_{1}}[i][-],C_{\sigma_{2}}[i][-])$\;
\end{algorithm}

\begin{lemma}
Given parameters $k$ and $t$, Algorithm~\ref{algo:sketch} gives an approximation of the Sketch $\star$-metric, using  $\mathcal O\left( t (\log n + k \log m)\right)$ bits of space.
\end{lemma}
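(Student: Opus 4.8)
The plan is to read the lemma as a space-accounting statement and to bound separately the two data structures the algorithm maintains, namely the $t$ hash functions and the pair of counter arrays $C_{\sigma_{1}}$ and $C_{\sigma_{2}}$, then add the two contributions. The qualifier that the algorithm ``gives an approximation'' is already justified by the construction: each row $i$ of the arrays realizes a single random $k$-cell partition $\rho_{i}$ of $\Omega$, two items falling in the same cell of $\rho_{i}$ exactly when $h_{i}$ maps them to the same bucket, so returning $\max_{1\leq i\leq t}\phi(C_{\sigma_{1}}[i][-],C_{\sigma_{2}}[i][-])$ amounts to evaluating $\widehat{\phi}_{k}$ over the sampled partitions $\rho_{1},\ldots,\rho_{t}\subseteq\mathcal P_{k}(\Omega)$. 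I would recall this correspondence in one sentence and then devote the proof to the storage count, since the quality of the approximation as a function of $t$ belongs to the probabilistic analysis and is not needed for the space bound.

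First I would account for the hash functions. A function drawn from a $2$-universal family $h:[n]\rightarrow[k]$ admits the standard representation $h(x)=\left((ax+b)\bmod p\right)\bmod k$, where $p$ is a fixed prime with $p\geq n$ and $a,b$ are the only quantities that must be stored. Since $p=\mathcal O(n)$, each of $a$ and $b$ fits in $\mathcal O(\log n)$ bits, so a single hash function costs $\mathcal O(\log n)$ bits and the whole collection $h_{1},\ldots,h_{t}$ costs $\mathcal O(t\log n)$ bits. Next I would account for the counters: each array has $t\times k$ entries, and since every stream item increments exactly one counter per row, no counter ever exceeds the length of its stream; hence each value lies in $\{0,1,\ldots,m\}$ and occupies $\mathcal O(\log m)$ bits. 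With $2tk$ counters in total the two arrays together take $\mathcal O(tk\log m)$ bits, and summing the two contributions yields $\mathcal O(t\log n)+\mathcal O(tk\log m)=\mathcal O\!\left(t(\log n+k\log m)\right)$ bits, as claimed.

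None of these steps is a genuine obstacle; the only point requiring care is the counter magnitude bound, which invokes $m$ as an a priori upper bound on the stream length even though, as stressed in Section~\ref{sec:model}, $m$ is never known to nor stored by the algorithm. I would therefore make explicit that $m$ enters only as the worst-case range, hence the worst-case word width $\mathcal O(\log m)$, of a counter, and is not a quantity the algorithm computes. The conceptually substantive content, that $t=\lceil\log(1/\delta)\rceil$ sampled partitions suffice to approximate the maximum over all of $\mathcal P_{k}(\Omega)$, is deferred to the subsequent analysis and does not affect the space figure established here.
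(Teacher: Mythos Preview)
Your proof is correct and follows exactly the same decomposition as the paper: bound the storage of the $t$ hash functions by $\mathcal O(t\log n)$ and the $2tk$ counters by $\mathcal O(tk\log m)$, then sum. If anything, you supply more justification than the paper does (the explicit $(ax+b)\bmod p$ representation, the counter-magnitude argument, and the remark that $m$ enters only as a worst-case word width), but the underlying argument is identical.
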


\begin{proof}
The matrices $C_{\sigma_i}$, for any $i \in \{1,2\}$,  are composed of $t\times k$ counters, which uses $\mathcal O\left(  \log m\right)$. On the other hand, with a suitable choice of hash family, we can store the hash functions above in $\mathcal O(t \log n)$ space.
\end{proof}


\section{Performance Evaluation}\label{toc:eval}

We have implemented our \emph{Sketch $\star$-metric} and have conducted a series of experiments  on different types of streams and for different parameters settings.  We have  fed our algorithm with both real-world data  and synthetic traces. Real data give a realistic representation of some real systems, while the latter ones allow to capture  phenomenon which may be difficult to obtain from real-world traces, and thus allow to check the robustness of our metric. We have varied all the significant parameters of our algorithm, that is,  the maximal number of distinct data items $n$ in each stream, the number of cells $k$ of each generated partition, and the number of generated partitions $t$. 
For each parameters setting, we have conducted and averaged $100$ trials of the same experiment, leading to  a total of more than $300,000$ experiments for the evaluation of our metric. Real data have been downloaded from the repository of Internet network traffic~\cite{ITA}. We have used five traces among the available ones. 
Two of them represent two weeks logs of HTTP requests to the Internet service provider ClarkNet WWW server -- ClarkNet is a full Internet access provider for the Metro Baltimore-Washington DC area --  the other two ones contain two months of HTTP requests to the NASA Kennedy Space Center WWW server, and the last one represents seven months of HTTP requests to the WWW server of the University of Saskatchewan, Canada. In the following these data sets will be respectively referred to as ClarkNet, NASA, and Saskatchewan traces. Table~\ref{tab:traces} presents the statistics of these data traces, in term of stream size (\emph{cf.} ``\# items'' in the table), number of distinct items in each stream (\emph{cf.} ``\# distinct items'') and the number of occurrences of the most frequent item (\emph{cf.} ``max. freq.''). For more information on these data traces, an extensive analysis is available in~\cite{Arlitt96}.
We have evaluated the accuracy of our metric by comparing for each data set (real and synthetic), the results obtained with our algorithm  on the stream sketches (referred to as \emph{Sketch} in the legend) and the ones obtained on full streams (referred to as \emph{Ref} distance in the legend). That is, for each couple of input streams, and for each generalized metric $\phi$, we have computed both the exact distance between the two streams and the one as generated by our metric.
We now present the main lessons drawn from these experiments. 

\begin{table}
\centering
\begin{tabular}{lccc}
\hline
Data trace & \# items & \# distinct items & max. freq. \\
\hline
\hline
NASA (July) & 1,891,715 & 81,983 & 17,572\\
\hline
NASA (August) & 1,569,898 & 75,058 & 6,530\\
\hline
ClarkNet (August) & 1,654,929 & 90,516 & 6,075\\
\hline
ClarkNet (September) & 1,673,794 & 94,787 & 7,239\\
\hline
Saskatchewan & 2,408,625 & 162,523 & 52,695\\
\hline
\end{tabular}
\caption{Statistics of real data traces.}\label{tab:traces}
\vspace*{-0.5cm}
\end{table}

\begin{figure}
\centering
\includegraphics[width=.45\textwidth]{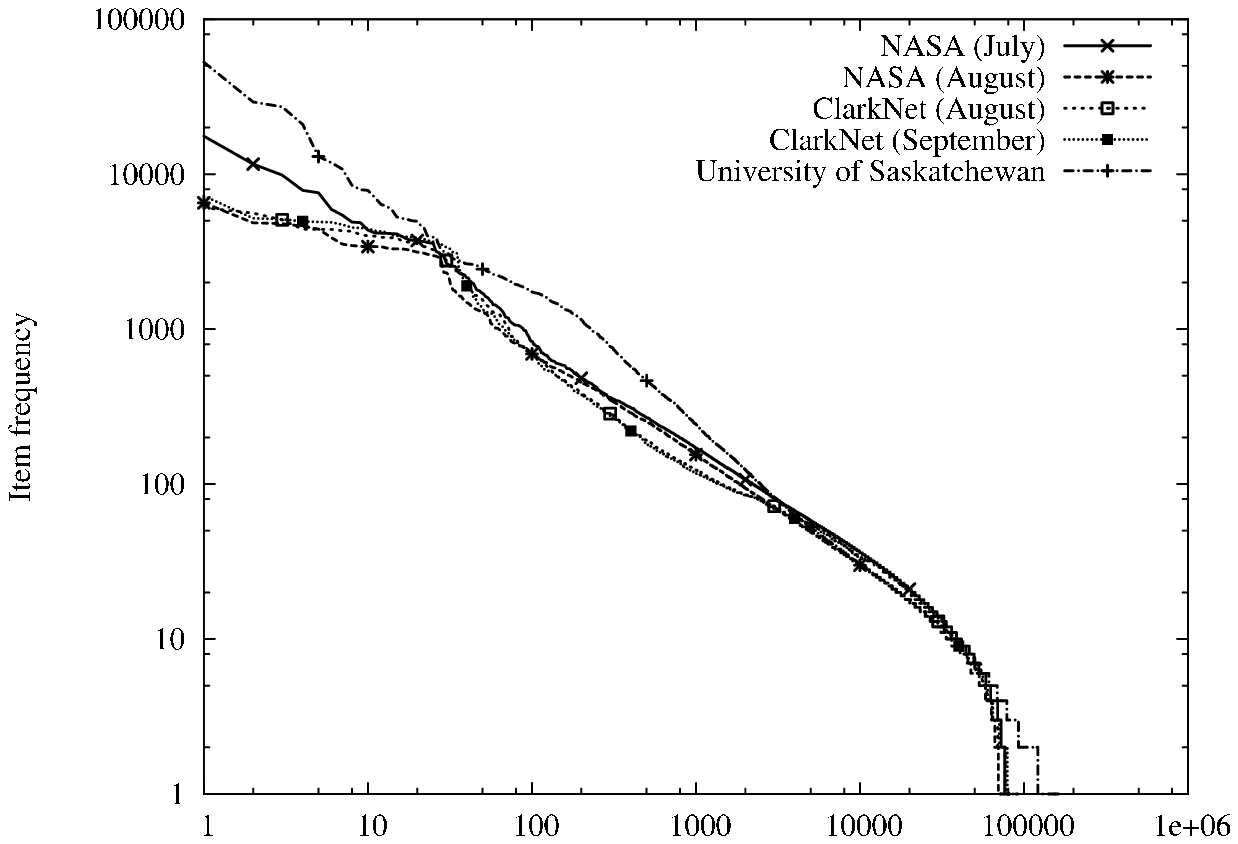}
\caption{Logscale distribution of frequencies for each real data trace.}
\label{fig:distrib-real}
\end{figure}

Figure~\ref{fig:accuracy} and~\ref{fig:accuracy-real} shows the accuracy of our metric as a function of the different input streams and the different generalized metrics applied on these streams.  All the histograms shown in Figures~\ref{fig:test-uniform-woCCM}--\ref{fig:test-zipf4-CCM-real} share the same legend, but for readability reasons, this legend is only indicated on histogram~\ref{fig:test-zipf1}.  Three generalized metrics have been used, namely  the Bhattacharyya distance, the Kullback-Leibler and the Jensen-Shannon divergences, and five distribution families  denoted by $p$ and $q$ have been compared with these metrics.

\begin{figure*}[!h]
\centering
\subfigure[$p=$ Uniform distribution]{
\includegraphics[width=.45\textwidth, height=0.21\textheight]{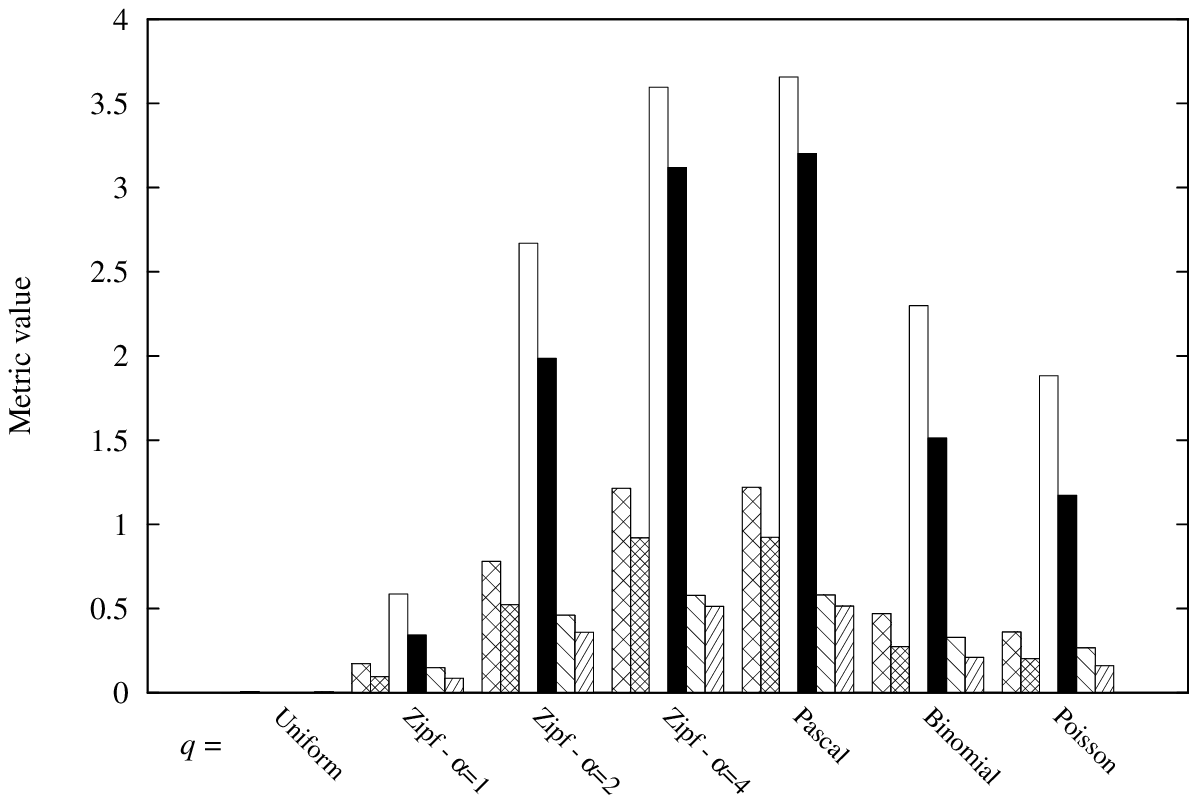}
\label{fig:test-uniform-woCCM}
}
\subfigure[$p=$ Zipf distribution with $\alpha = 1$]{
\includegraphics[width=.45\textwidth, height=0.21\textheight]{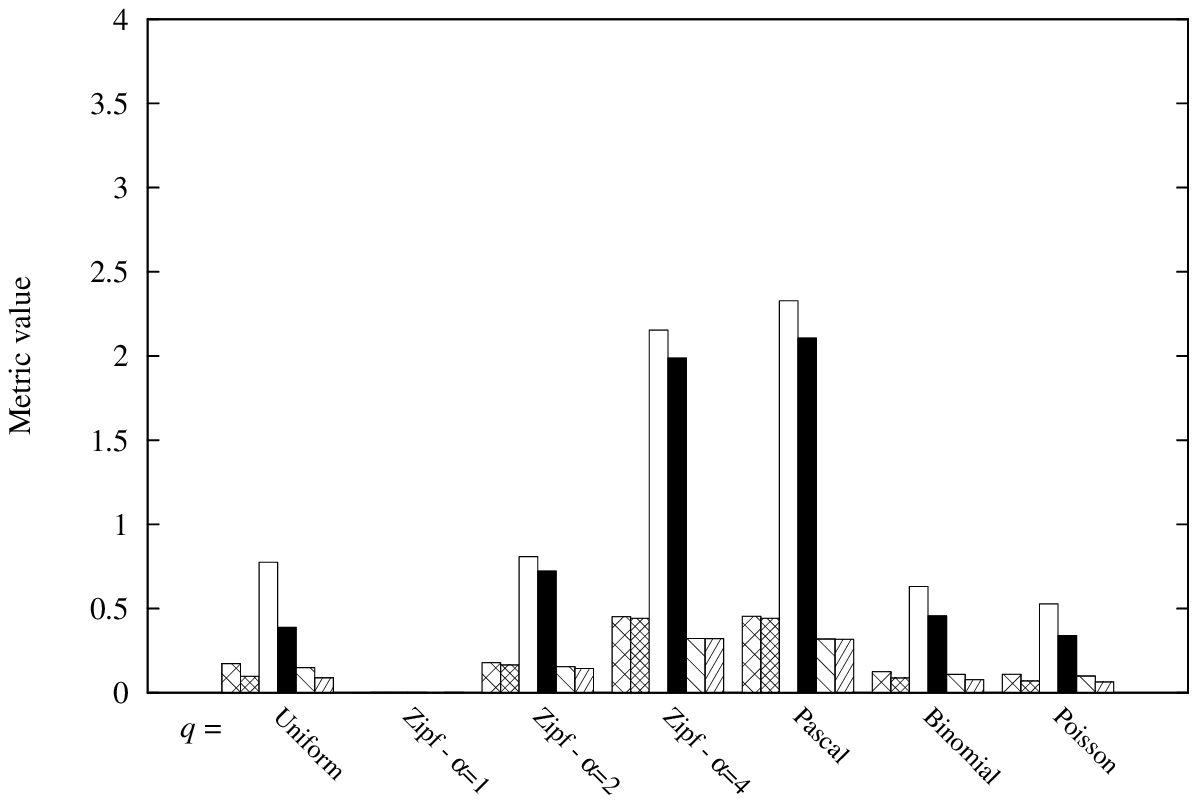}
\label{fig:test-zipf1-woCCM}
}
\subfigure[$p=$ Pascal distribution with $r=3$ and $p=\frac{n}{2r+n}$]{
\includegraphics[width=.45\textwidth, height=0.21\textheight]{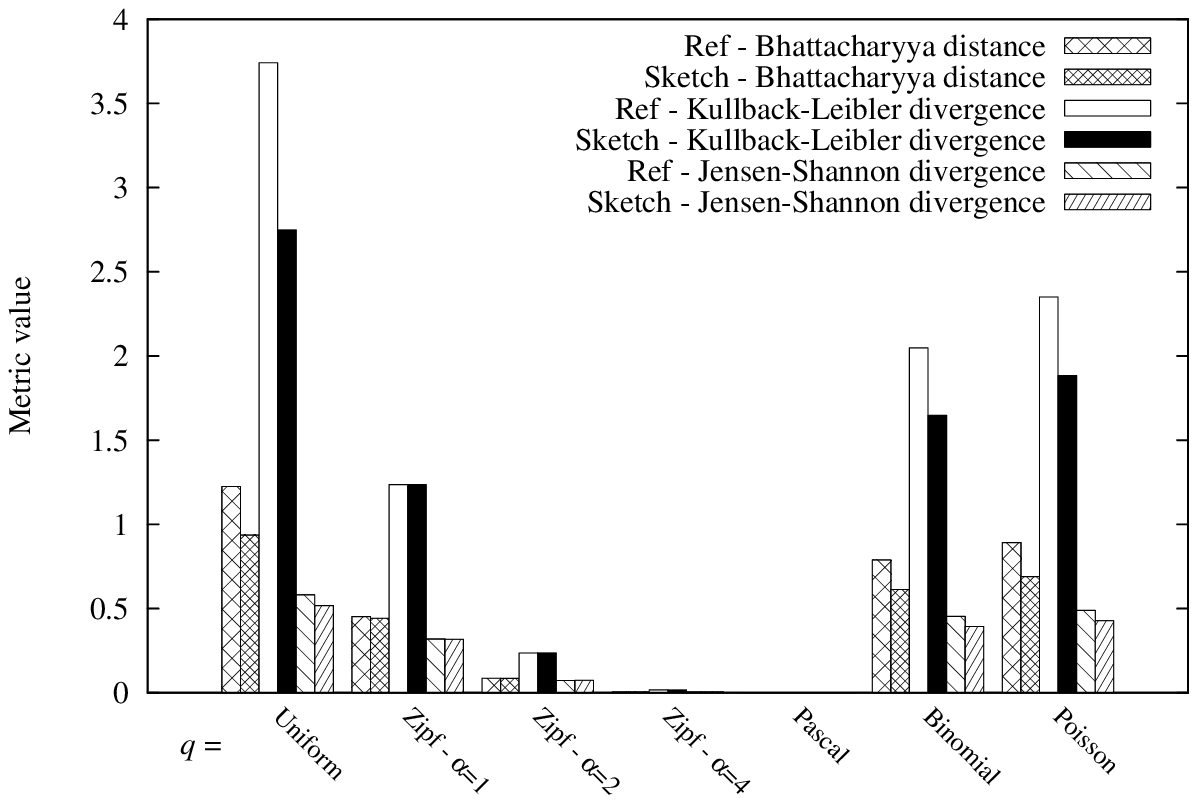}
\label{fig:test-zipf1}
}
\subfigure[$p=$ Zipf distribution with $\alpha = 2$]{
\includegraphics[width=.45\textwidth, height=0.21\textheight]{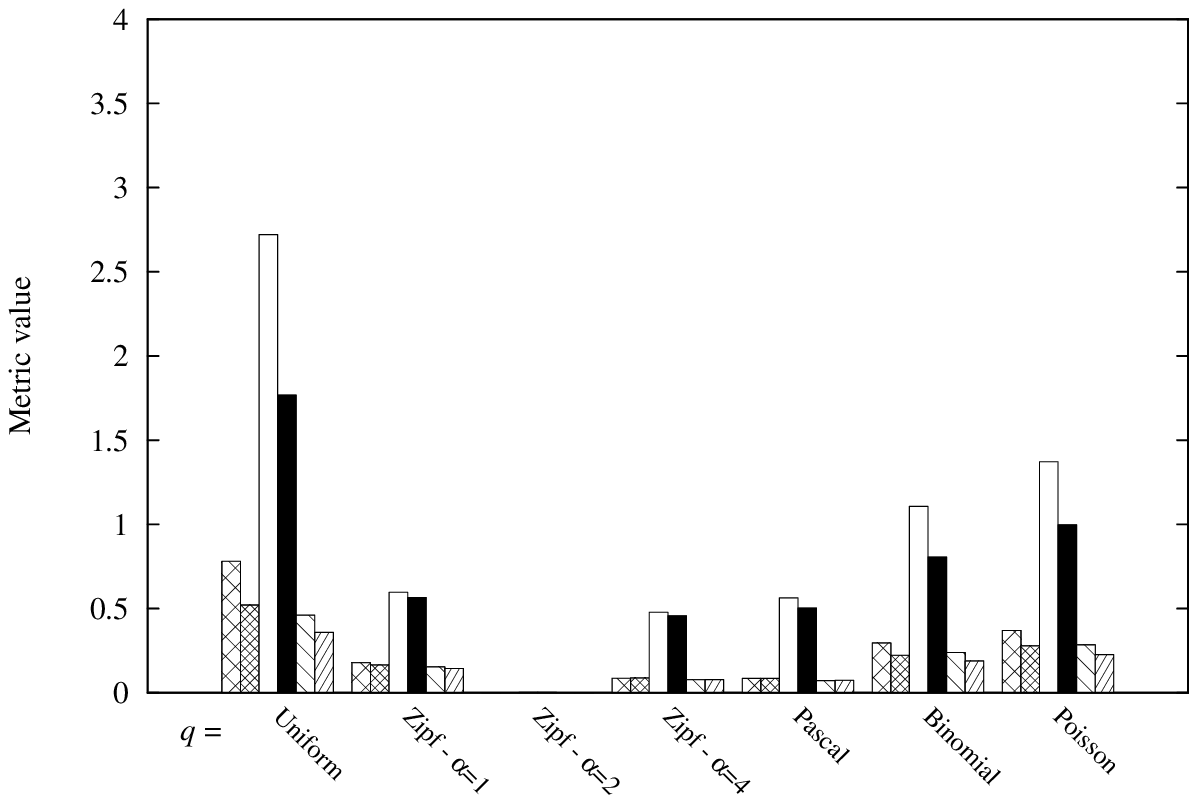}
\label{fig:test-pascal}
}
\subfigure[$p=$ Binomial distribution with $p=0.5$]{
\includegraphics[width=.45\textwidth, height=0.21\textheight]{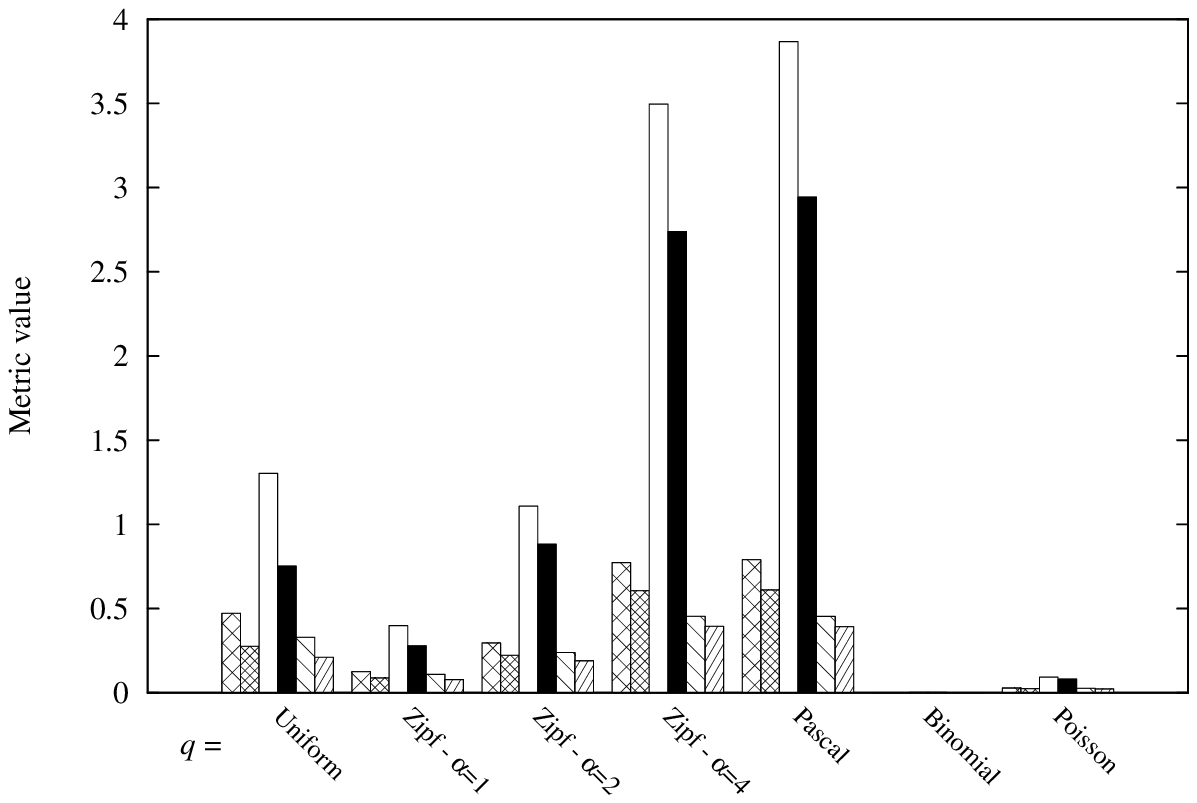}
\label{fig:test-zipf4-CCM}
}
\subfigure[$p=$ Zipf distribution with $\alpha = 4$]{
\includegraphics[width=.45\textwidth, height=0.21\textheight]{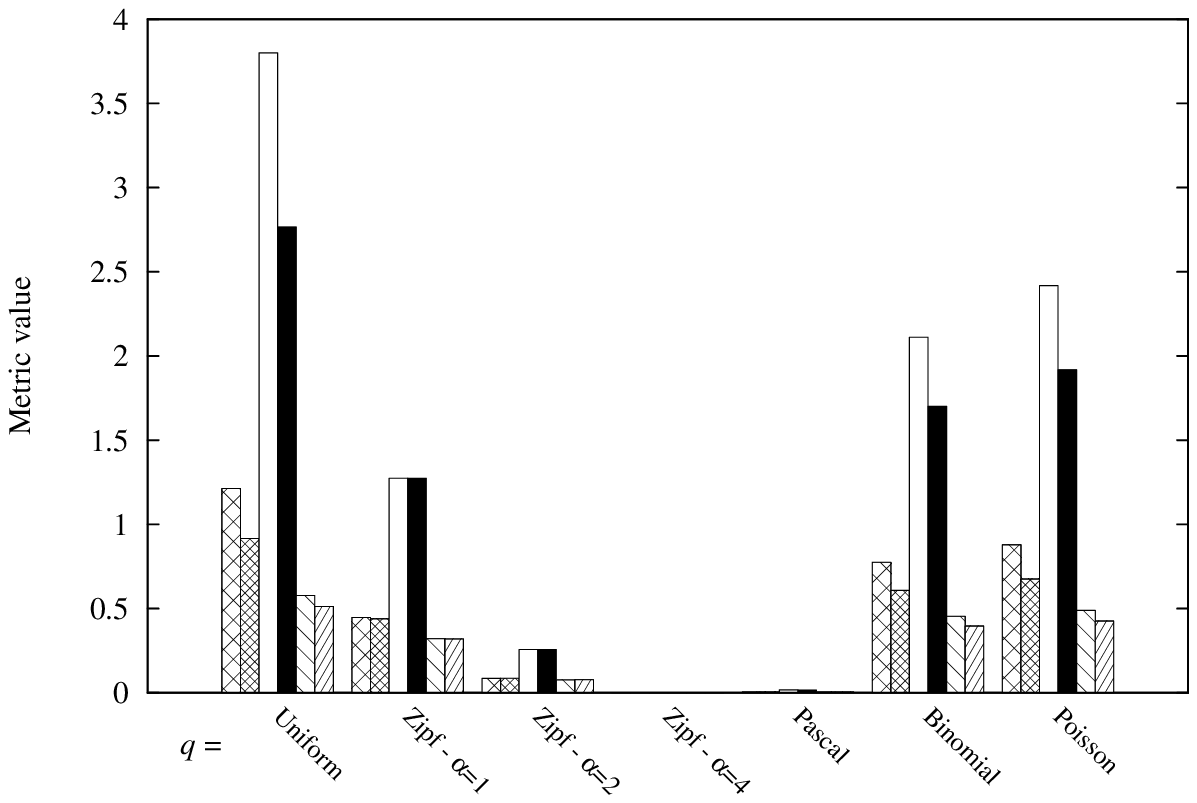}
\label{fig:test-pascal2}
}
\subfigure[$p=$ Poisson distribution with $p=\frac{n}{2}$]{
\includegraphics[width=.45\textwidth, height=0.21\textheight]{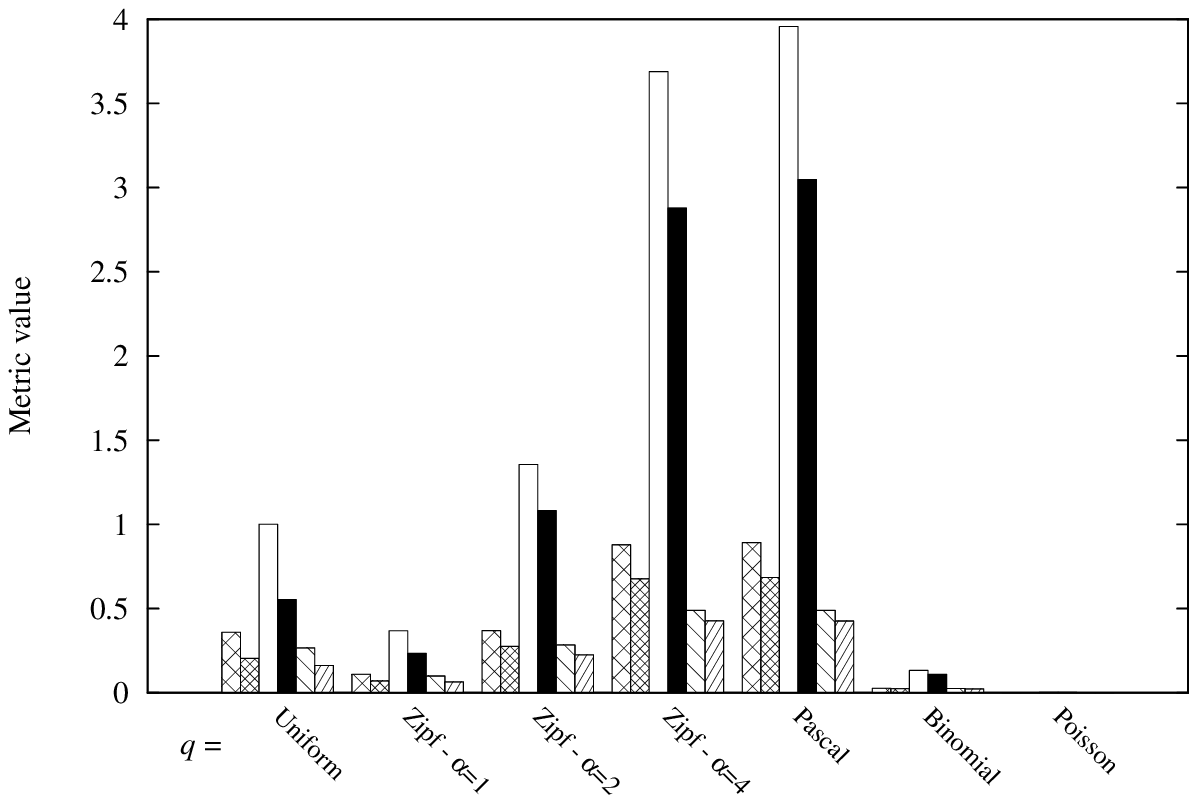}
\label{fig:test-zipf4}
}
\subfigure[$p=$~Uniform distribution and $q =$~Pascal distribution, as a function of its parameter $r$ ($p=\frac{n}{2r+n}$).]{
\includegraphics[width=.45\textwidth, height=0.21\textheight]{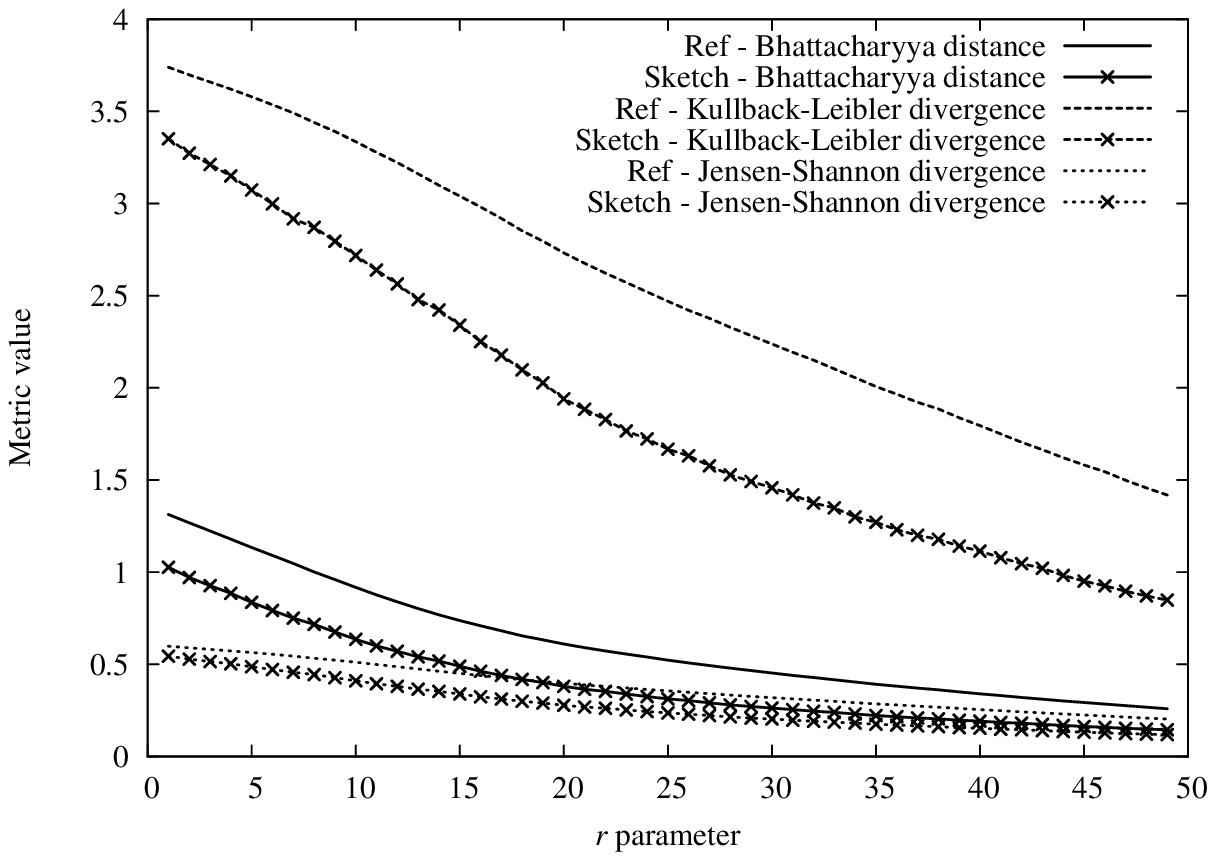}
\label{fig:test-uniform}
}
\caption{\emph{Sketch $\star$-metric} accuracy  as a function of $p$ and $q$ (or $r$ for~\ref{fig:test-uniform}). Parameters setting is as follows: $m= 200,000$; $n=4,000$; $k=200$; $t=4$ where $m$ represents the size of the stream, $n$ the number of distinct data items in the stream, $t$ the number of generated partitions and $k$ the number of cells per generated partition.}
\label{fig:accuracy}
\end{figure*}

\begin{figure*}[!h]
\centering
\subfigure[$p=$ NASA webserver (August)]{
\includegraphics[width=.45\textwidth, height=0.21\textheight]{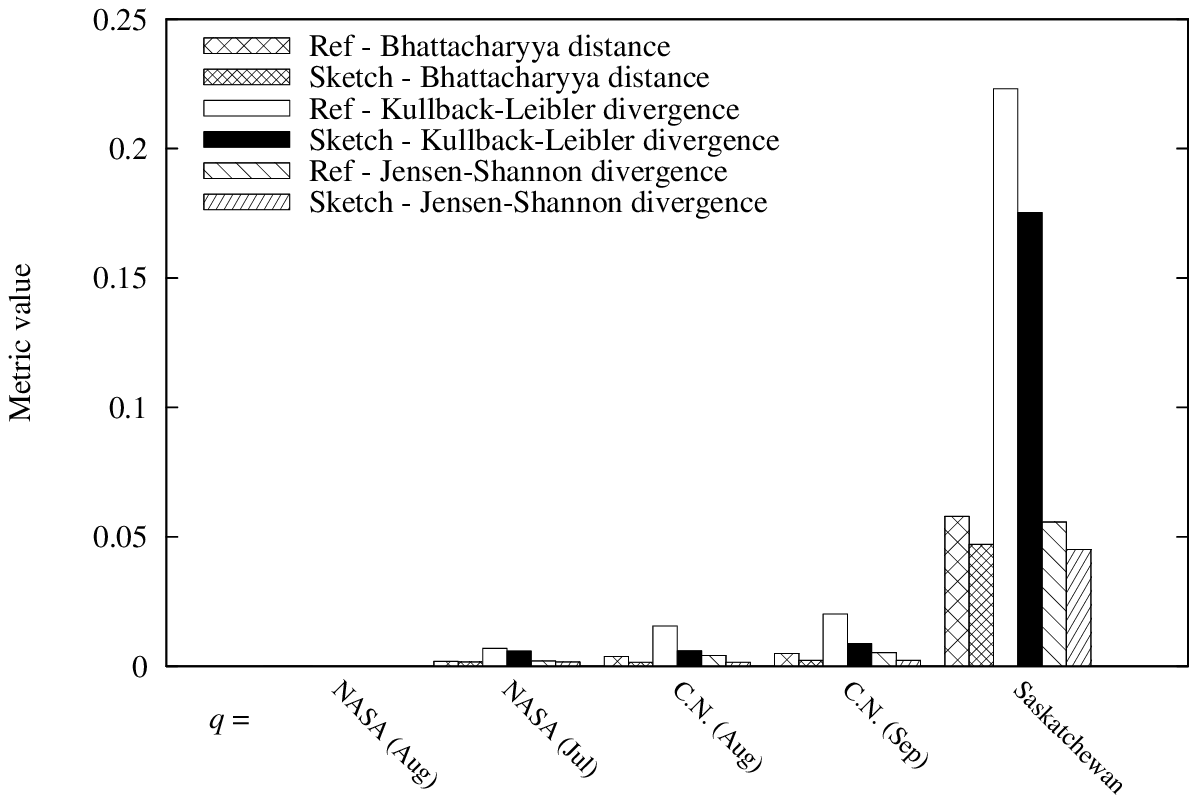}
\label{fig:test-uniform-woCCM-real}
}
\subfigure[$p=$ NASA webserver (July)]{
\includegraphics[width=.45\textwidth, height=0.21\textheight]{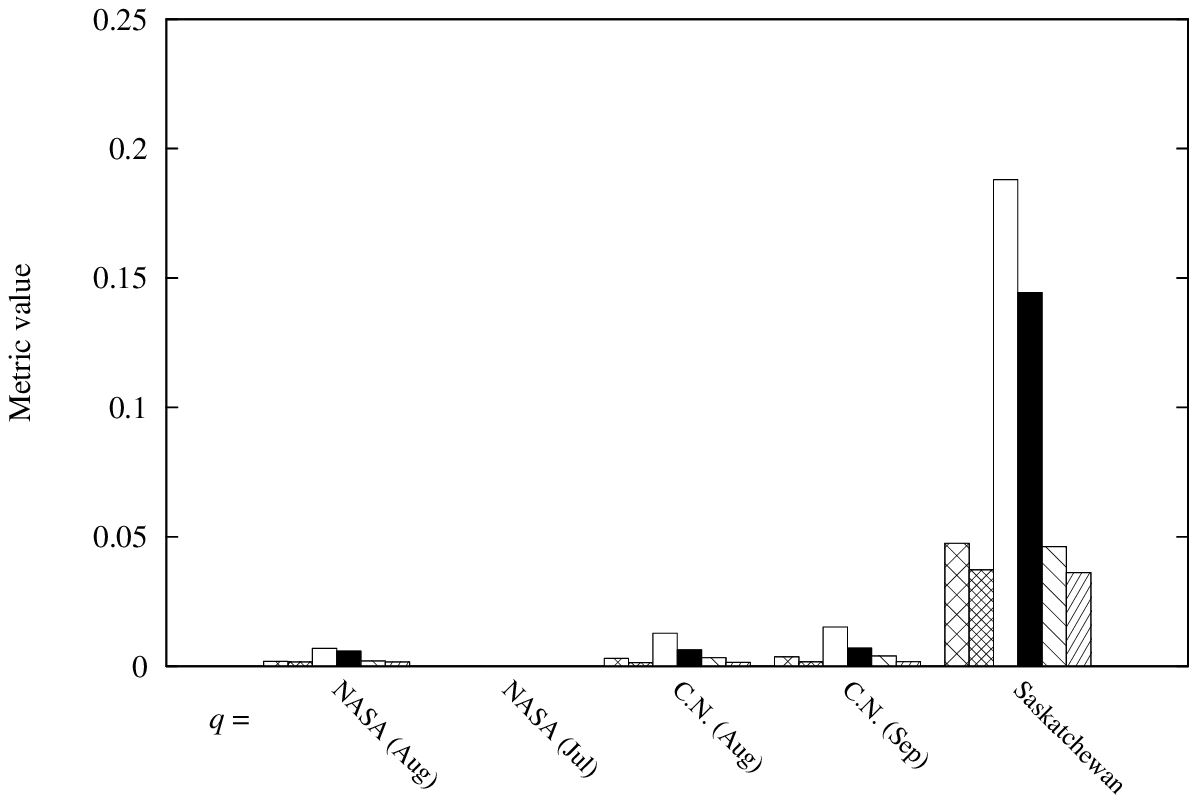}
\label{fig:test-zipf1-woCCM-real}
}
\subfigure[$p=$ ClarkNet webserver (August)]{
\includegraphics[width=.45\textwidth, height=0.21\textheight]{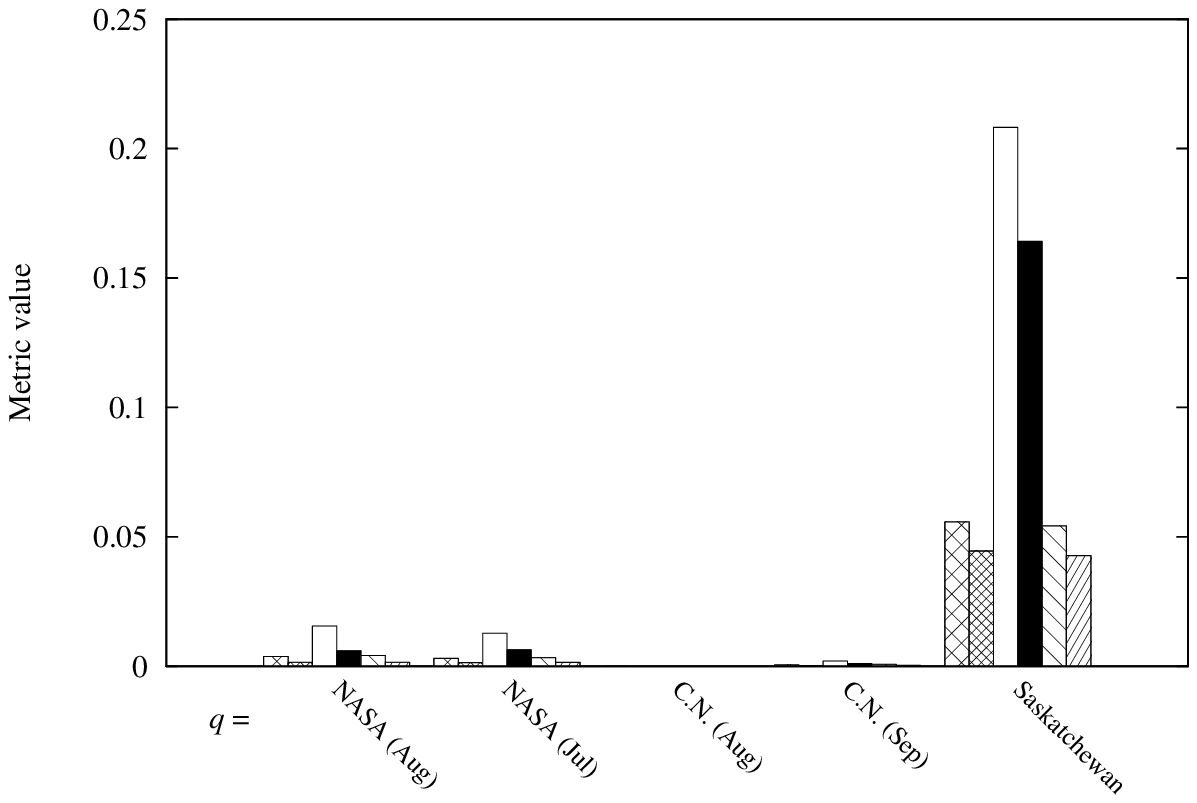}
\label{fig:test-zipf1-real}
}
\subfigure[$p=$ ClarkNet webserver (September)]{
\includegraphics[width=.45\textwidth, height=0.21\textheight]{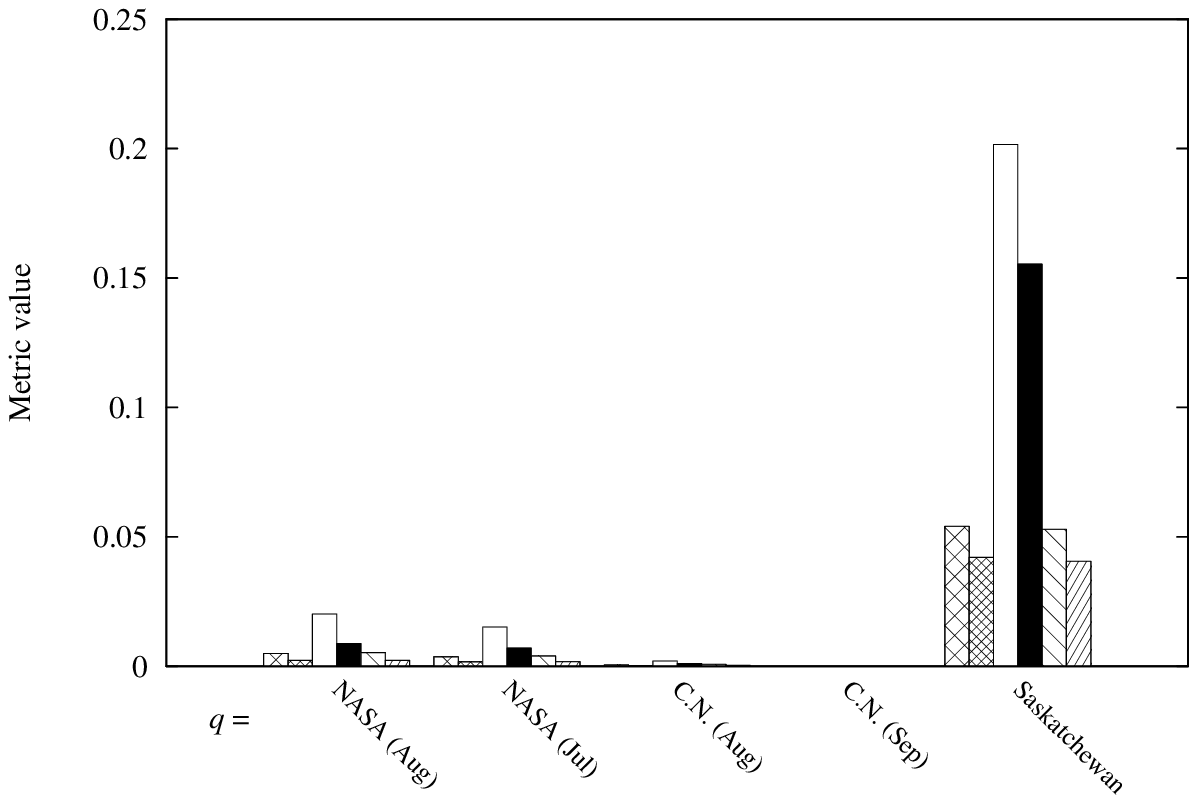}
\label{fig:test-pascal-real}
}
\subfigure[$p=$ Saskatchewan University webserver]{
\includegraphics[width=.45\textwidth, height=0.21\textheight]{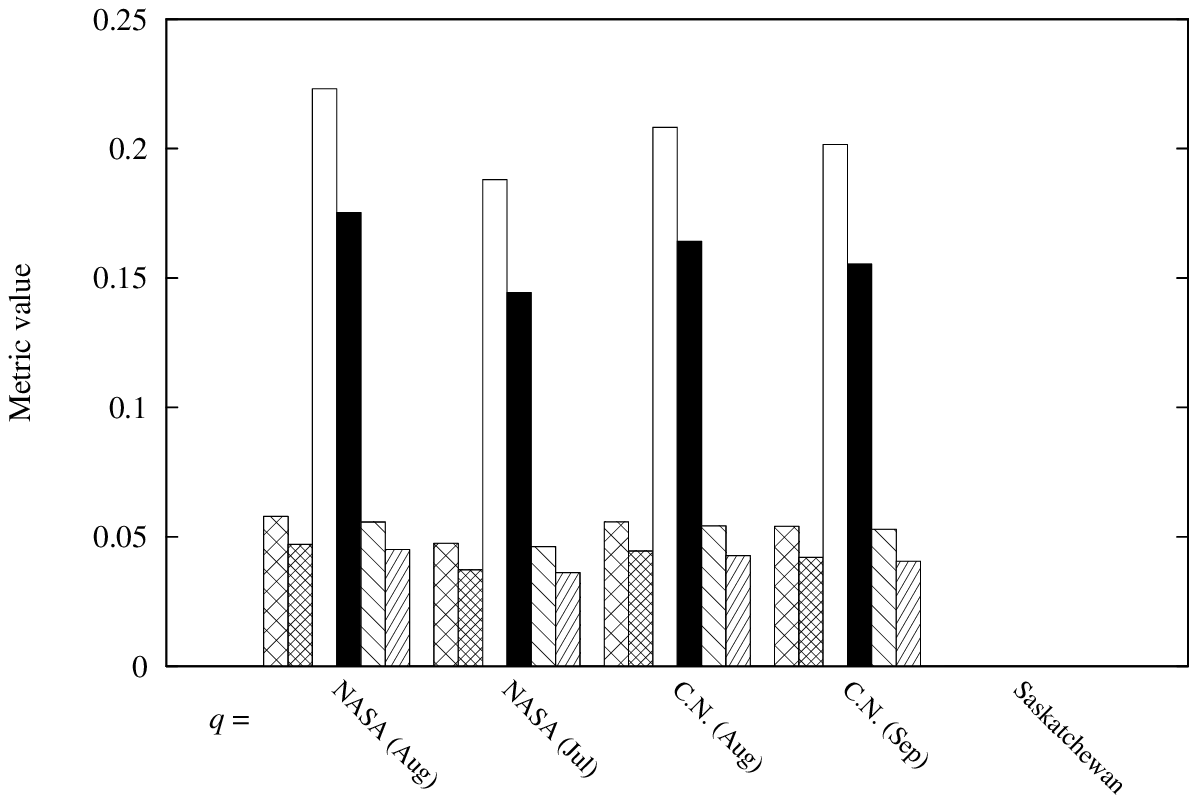}
\label{fig:test-zipf4-CCM-real}
}
\caption{\emph{Sketch $\star$-metric} accuracy  as a function of real data traces. Parameters setting is as follows: $k=2,000$; $t=4$.}
\label{fig:accuracy-real}
\end{figure*}

Let us focus on synthetic traces. The first noticeable remark is that our metric behaves perfectly well when the two compared streams follow the same distribution, whatever the generalized metric $\phi$ used (\emph{cf.}, Figure~\ref{fig:test-uniform-woCCM} with the uniform distribution, Figures~\ref{fig:test-zipf1-woCCM},~\ref{fig:test-pascal} and~\ref{fig:test-pascal2}  with the Zipf distribution, Figure~\ref{fig:test-zipf1} with the Pascal distribution, Figure~\ref{fig:test-zipf4-CCM} with the Binomial distribution, and Figure~\ref{fig:test-zipf4} with the Poisson one). This result is  interesting as it allows the sketch $\star$-metric to be a very good candidate as a  parametric method for making inference about the parameters of the distribution that follow an input stream. 
The general tendency is that when the distributions of input streams are close  (\emph{e.g}, Zipf distribution with different parameter,  Pascal and the Zipf with $\alpha=4$), then applying the generalized metrics $\phi$  on sketches give a  good estimation of the distance as computed on the full streams. 

Now, when the two input distributions exhibit a totally different shape, this may have an  impact on the precision of our metric. Specifically, let us consider as input distributions the Uniform and the Pascal distributions (see Figure~\ref{fig:test-uniform-woCCM} and~\ref{fig:test-zipf1}). Sketching the Uniform distribution leads to  $k$-cell partitions whose value is well distributed, that is, for a given partition all the $k$ cell values have with high probability the same value. Now, when sketching the Pascal distribution, the repartition of the data items in the cells of any given partitions is such that a few number of data items (those with high frequency) populate a very few number of cells. However, the values of these cells is very large compared to the other cells, which are populated by a large number of data items whose frequency is small. Thus the contribution of data items exhibiting a small frequency and sharing the cells of highly frequent items will be biased compared to the contribution of the other items. This explains why the accuracy of the sketch $\star$-metric is slightly lowered in these cases. 

We can also observe the strong impact of the non-symmetry of the Kullback-Leibler divergence on the computation of the distance (computed on full streams or on sketches) with a clear influence when the input streams follow a Pascal and Zipf with $\alpha=1$ distributions (see Figures~\ref{fig:test-zipf1} and~\ref{fig:test-zipf1-woCCM}).

Finally, Figure~\ref{fig:test-uniform} summarizes the good properties of our method whatever the  input streams to be compared and the generalized metric $\phi$ used to do this comparison. 

The same general remarks hold when considering real data sets. Indeed, Figure~\ref{fig:accuracy-real} shows that when the input streams are close to each other, which is the case for both (July and August) NASA  and (August and September) ClarkNet traces (\emph{cf.} Figure~\ref{fig:distrib-real}), then applying the generalized metrics $\phi$  on sketches gives good results w.r.t. full streams. When the shapes of the input streams are different (which is the case for Saskatchewan w.r.t. the 4 other input streams), the accuracy of the sketch $\star$-metric decreases a little bit but in a small proportion. Notice that the scales on the y-axis differ significantly in Figure~\ref{fig:accuracy} and in Figure~\ref{fig:accuracy-real}


\balance

Figure~\ref{fig:kparameters} presents the impact of the number of cells per generated partition on the accuracy of our metric on both synthetic traces and real data.  It clearly shows that, by increasing $k$,  the number of data items per cell in the generated partition shrinks and thus the absolute error  on the computation of the distance decreases. The same feature appears when the number $n$ of distinct data items in the stream increases. Indeed, when $n$ increases (for a given $k$), the number data items per cell augments and thus the precision of our metric decreases. This gives rise to a shift of the inflection point, as illustrated in Figure~\ref{fig:test_real_epsilon},  due to the fact that data sets have almost twenty times more distinct  data items than the synthetic ones. As aforementioned, the input streams exhibit very different shapes which explain the strong impact of $k$.  Note also that $k$ has the same  influence on the \emph{Sketch $\star$-metric} for all the generalized distances $\phi$. 

\FloatBarrier

\begin{figure}
\centering
\subfigure[\emph{Sketch $\star$-metric} accuracy  as  a function of  $k$. We have \newline
 $m= 200,000$; $n=4,000$; $t=4$;  $r=3$]{
\includegraphics[width=.45\textwidth, height=0.21\textheight]{}
\label{fig:test_pascal_epsilon}
}
\subfigure[{Sketch $\star$-metric} accuracy between data trace extracted from ClarkNetwork (August) and Saskatchewan University, as a function of $k$]{
\includegraphics[width=.45\textwidth, height=0.21\textheight]{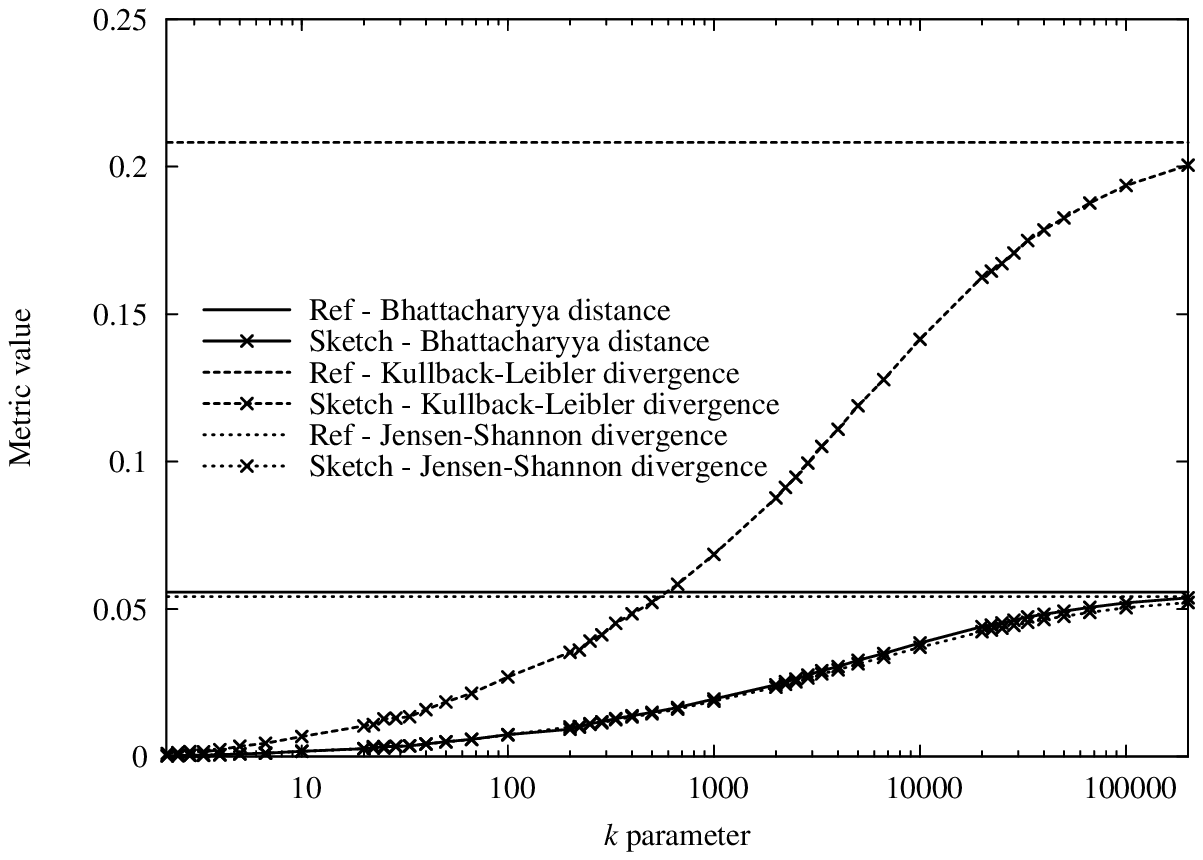}
\label{fig:test_real_epsilon}
}
\caption{\emph{Sketch $\star$-metric}  between the Uniform distribution and Pascal with parameter $p=\frac{n}{2r+n}$ (Figures~\ref{fig:test_pascal_epsilon} and~\ref{fig:test_pascal_delta}), and between data trace extracted from ClarkNetwork (August) and Saskatchewan University (Figures~\ref{fig:test_real_epsilon} and~\ref{fig:test_real_delta}).}  
\label{fig:kparameters}
\end{figure}

Figure~\ref{fig:tparameters} shows the slight  influence of the number $t$ of generated partitions on the accuracy of our metric. The reason comes from the use of $2$-universal hash functions, which  guarantee  for each of them and with high probability that data items are uniformly distributed over the cells of any partition. As a consequence, augmenting the number of such hash functions has a weak influence on the accuracy of the metric.  Figure~\ref{fig:parameters} focuses on  the error made on the \emph{Sketch $\star$-metric} for five different values of $t$  as a function of parameter $r$ of the Pascal distribution (recall that increasing values of $r$ -- while maintaining the mean value -- makes the shape of the Pascal  distribution flatter).  Figures~\ref{fig:test_pascal_r_delta_b_error},~\ref{fig:test_pascal_r_delta_kl_error}, and~\ref{fig:test_pascal_r_delta_js_error} respectively depict for each value of $t$ the difference between the reference and the sketch values which makes more visible the impact of $t$. The same main lesson drawn from these figures is the moderate  impact of $t$ on the precision of our algorithm.



\section{Conclusion and open issues}\label{sec:conclusion}

In this paper, we have introduced a new metric, the \emph{Sketch $\star$-metric}, that allows to compute any  generalized metric $\phi$ on  the summaries of two large input streams. We have presented  a simple and efficient algorithm  to sketch streams and compute this metric, and we have shown that it behaves pretty well whatever the considered input streams. We are convinced of the undisputable  interest of such a metric in various domains including machine learning, data mining, databases, information retrieval and network monitoring. 

Regarding future  works, we plan to characterize  our metric among  R\'enyi divergences~\cite{renyi61}, also known as $\alpha$-diver\-gences, which generalize different divergence classes. We also plan to consider a distributed setting, where each site would be in charge of analyzing its own streams and then would propagate its results to the other sites of the system for comparison or merging. An immediate application of such a ``tool'' would be to detect massive attacks in a decentralized manner (\emph{e.g.}, by identifying specific connection profiles as with worms propagation, and massive port scan attacks or by detecting sudden variations in the volume of received data).

\balance
\bibliographystyle{IEEEtran}
\bibliography{AB13-INFOCOM}  

\begin{thebibliography}{10}
\providecommand{\url}[1]{#1}
\csname url@samestyle\endcsname
\providecommand{\newblock}{\relax}
\providecommand{\bibinfo}[2]{#2}
\providecommand{\BIBentrySTDinterwordspacing}{\spaceskip=0pt\relax}
\providecommand{\BIBentryALTinterwordstretchfactor}{4}
\providecommand{\BIBentryALTinterwordspacing}{\spaceskip=\fontdimen2\font plus
\BIBentryALTinterwordstretchfactor\fontdimen3\font minus
  \fontdimen4\font\relax}
\providecommand{\BIBforeignlanguage}[2]{{%
\expandafter\ifx\csname l@#1\endcsname\relax
\typeout{** WARNING: IEEEtran.bst: No hyphenation pattern has been}%
\typeout{** loaded for the language `#1'. Using the pattern for}%
\typeout{** the default language instead.}%
\else
\language=\csname l@#1\endcsname
\fi
#2}}
\providecommand{\BIBdecl}{\relax}
\BIBdecl

\bibitem{SKSZC03}
B.~K. Subhabrata, E.~Krishnamurthy, S.~Sen, Y.~Zhang, and Y.~Chen,
  ``Sketch-based change detection: Methods, evaluation, and applications,'' in
  \emph{Internet Measurement Conference}, 2003, pp. 234--247.

\bibitem{Busnel2008SOLIST-or-Ho}
Y.~Busnel, M.~Bertier, and A.-M. Kermarrec, ``{SOLIST or How To Look For a
  Needle in a Haystack?}'' in \emph{the 4th IEEE International Conference on
  Wireless and Mobile Computing, Networking and Communications (WiMob'2008)},
  Avignon, France, October 2008.

\bibitem{CHZ02}
M.~Chu, H.~Haussecker, and F.~Zhao, ``Scalable information-driven sensor
  querying and routing for ad hoc heterogeneous sensor networks,''
  \emph{International Journal of High Performance Computing Applications},
  vol.~16, no.~3, pp. 293--313, 2002.

\bibitem{KLDIVERGE}
\BIBentryALTinterwordspacing
S.~Kullback and R.~A. Leibler, ``On information and sufficiency,'' \emph{The
  Annals of Mathematical Statistics}, vol.~22, no.~1, pp. 79--86, 1951.
  [Online]. Available: \url{http://dx.doi.org/10.2307/2236703}
\BIBentrySTDinterwordspacing

\bibitem{csiszar63}
I.~Csisz\'ar, ``Eine informationstheoretische ungleichung und ihre anwendung
  auf den beweis der ergodizitat von markoffschen ketten,'' \emph{Magyar. Tud.
  Akad. Mat. Kutat\'o Int. K\"ozl}, vol.~8, pp. 85--108, 1963.

\bibitem{JPSJ.18.328}
T.~Morimoto, ``Markov processes and the $h$-theorem,'' \emph{Journal of the
  Physical Society of Japan}, vol.~18, no.~3, pp. 328--331, 1963.

\bibitem{ali-silvey}
S.~M. Ali and S.~D. Silvey, ``{General Class of Coefficients of Divergence of
  One Distribution from Another},'' \emph{Journal of the Royal Statistical
  Society. Series B (Methodological)}, vol.~28, no.~1, pp. 131--142, 1966.

\bibitem{MR0010358}
A.~Bhattacharyya, ``On a measure of divergence between two statistical
  populations defined by their probability distributions,'' \emph{Bulletin of
  the Calcutta Mathematical Society}, vol.~35, pp. 99--109, 1943.

\bibitem{basseville}
M.~Basseville and J.-F. Cardoso, ``On entropies, divergences, and mean
  values,'' in \emph{Proceedings of the IEEE International Symposium on
  Information Theory}, 1995.

\bibitem{AMS}
N.~Alon, Y.~Matias, and M.~Szegedy, ``The space complexity of approximating the
  frequency moments,'' in \emph{Proceedings of the twenty-eighth annual ACM
  symposium on Theory of computing (STOC)}, 1996, pp. 20--29.

\bibitem{coverthomas1991}
T.~Cover and J.~Thomas, ``Elements of information theory,'' \emph{Wiley New
  York}, 1991.

\bibitem{CCM07}
A.~Chakrabarti, G.~Cormode, and A.~McGregor, ``A near-optimal algorithm for
  computing the entropy of a stream,'' in \emph{In ACM-SIAM Symposium on
  Discrete Algorithms}, 2007, pp. 328--335.

\bibitem{GMcV06}
S.~Guha, A.~McGregor, and S.~Venkatasubramanian, ``Streaming and sublinear
  approximation of entropy and information distances,'' in \emph{Proceedings of
  the Seventeenth Annual ACM-SIAM Symposium on Discrete Algorithms (SODA)},
  2006, pp. 733--742.

\bibitem{CBM06}
A.~Chakrabarti, K.~D. Ba, and S.~Muthukrishnan, ``Estimating entropy and
  entropy norm on data streams,'' in \emph{In Proceedings of the 23rd
  International Symposium on Theoretical Aspects of Computer Science
  (STACS)}.\hskip 1em plus 0.5em minus 0.4em\relax Springer, 2006.

\bibitem{LSOXZ06}
A.~Lall, V.~Sekar, M.~Ogihara, J.~Xu, and H.~Zhang, ``Data streaming algorithms
  for estimating entropy of network traffic,'' in \emph{Proceedings of the
  joint international conference on Measurement and modeling of computer
  systems (SIGMETRICS)}.\hskip 1em plus 0.5em minus 0.4em\relax ACM, 2006.

\bibitem{EDCC2012}
E.~Anceaume, Y.~Busnel, and S.~Gambs, ``{AnKLe: detecting attacks in large
  scale systems via information divergence},'' in \emph{Proceedings of the 9th
  European Dependable Computing Conference (EDCC)}, 2012.

\bibitem{NCA2012}
E.~Anceaume and Y.~Busnel, ``An information divergence estimation over data
  streams,'' in \emph{Proceedings of the 11th IEEE International Symposium on
  Network Computing and Applications (NCA)}, 2012.

\bibitem{CCF04}
M.~Charikar, K.~Chen, and M.~Farach-Colton, ``Finding frequent items in data
  streams,'' \emph{Theoretical Computer Science}, vol. 312, no.~1, pp. 3--15,
  2004.

\bibitem{CM05}
G.~Cormode and S.~Muthukrishnan, ``An improved data stream summary: the
  count-min sketch and its applications,'' \emph{J. Algorithms}, vol.~55,
  no.~1, pp. 58--75, 2005.

\bibitem{CG07}
G.~Cormode and M.~Garofalakis, ``Sketching probabilistic data streams,'' in
  \emph{Proceedings of the 2007 ACM SIGMOD international conference on
  Management of data}, 2007, pp. 281--292.

\bibitem{GIG07}
S.~Guha, P.~Indyk, and A.~Mcgregor, ``Sketching information divergences,''
  \emph{Machine Learning}, vol.~72, no. 1-2, pp. 5--19, 2008.

\bibitem{Muthukrishnan}
Muthukrishnan, \emph{Data Streams: Algorithms and Applications}.\hskip 1em plus
  0.5em minus 0.4em\relax Now Publishers Inc., 2005.

\bibitem{Bregman1967200}
L.~M. Bregman, ``The relaxation method of finding the common point of convex
  sets and its application to the solution of problems in convex programming,''
  \emph{USSR Computational Mathematics and Mathematical Physics}, vol.~7,
  no.~3, pp. 200--217, 1967.

\bibitem{hellinger09}
E.~Hellinger, ``Neue begr\"undung der theorie quadratischer formen von
  unendlichvielen ver\"anderlichen,'' \emph{J. Reine Angew. Math.}, vol. 136,
  pp. 210--271, 1909.

\bibitem{Csiszar1978}
I.~Csisz\'{a}r, ``{Information Measures: A Critical Survey},'' in
  \emph{Transactions of the Seventh Prague Conference on Information Theory,
  Statistical Decision Functions, Random Processes}.\hskip 1em plus 0.5em minus
  0.4em\relax Dordrecht: D. Riedel, 1978, pp. 73--86.

\bibitem{Csiszar91}
I.~Csisz\'ar, ``Why least squares and maximum entropy? an axiomatic approach to
  inference for linear inverse problems,'' \emph{The Annals of Statistics},
  vol.~19, no.~4, pp. 2032--2066, 1991.

\bibitem{Amari10}
S.-I. Amari and A.~Cichocki, ``{Information geometry of divergence
  functions},'' \emph{Bulletin of the Polish Academy of Sciences: Technical
  Sciences}, vol.~58, no.~1, pp. 183--195, 2010.

\bibitem{Amari09}
S.-I. Amari, ``{$\alpha$-Divergence Is Unique, Belonging to Both $f$-Divergence
  and Bregman Divergence Classes},'' \emph{IEEE Transactions on Information
  Theory}, vol.~55, no.~11, pp. 4925--4931, nov 2009.

\bibitem{ITA}
the Internet Traffic~Archive, ``\url{http://ita.ee.lbl.gov/html/traces.html},''
  Lawrence Berkeley National Laboratory, Apr. 2008.

\bibitem{Arlitt96}
M.~F. Arlitt and C.~L. Williamson, ``Web server workload characterization: the
  search for invariants,'' \emph{SIGMETRICS Performance Evaluation Review},
  vol.~24, no.~1, pp. 126--137, 1996.

\bibitem{renyi61}
A.~Renyi, ``{On measures of information and entropy},'' in \emph{Proceedings of
  the 4th Berkeley Symposium on Mathematics, Statistics and Probability}, 1960,
  pp. 547--561.

\end{thebibliography}

\begin{figure}[!h]
\centering
\subfigure[\emph{Sketch $\star$-metric} accuracy  as  a function of  parameter $t$. We have \newline
 $m= 200,000$; $n=4,000$; $k=200$ and  $r=3$]{
\includegraphics[width=.45\textwidth, height=0.21\textheight]{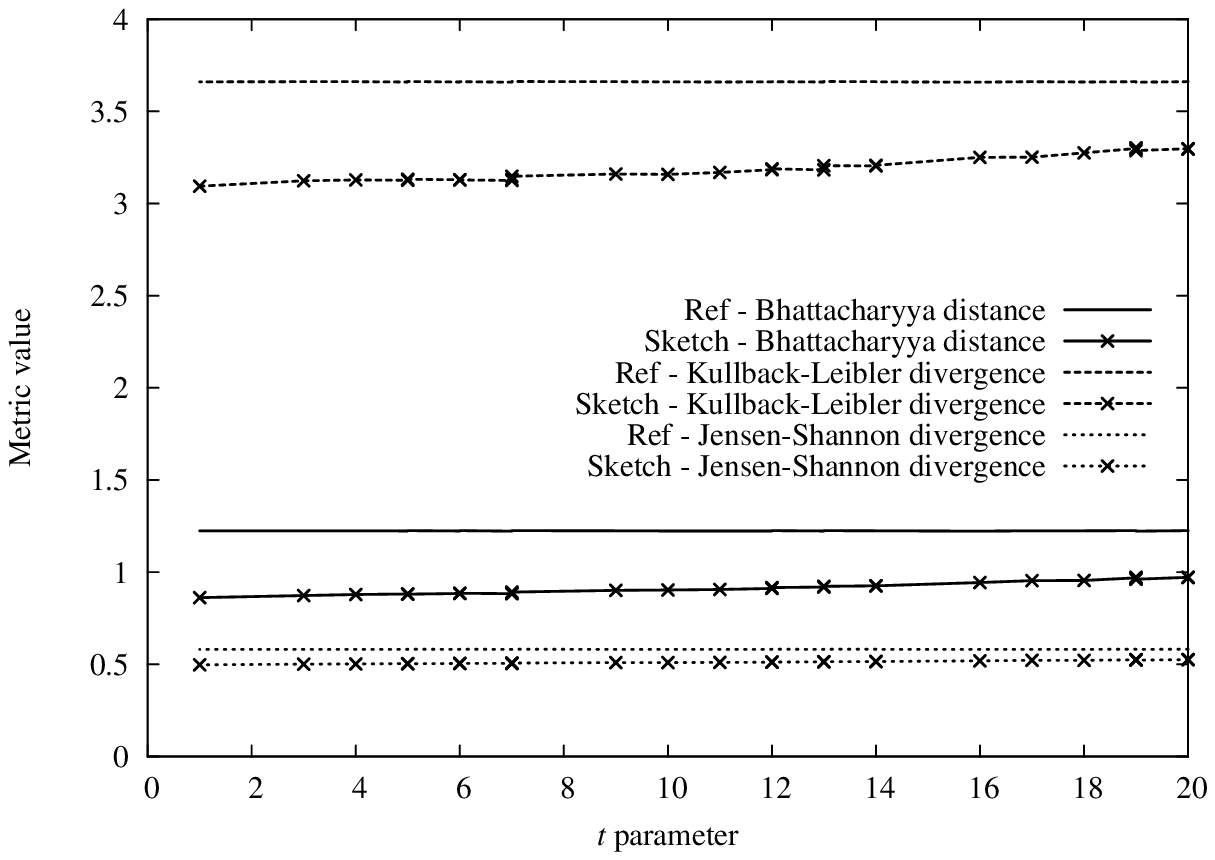}
\label{fig:test_pascal_delta}
}
\subfigure[{Sketch $\star$-metric} accuracy between data trace extracted from ClarkNetwork (August) and Saskatchewan University, as a function of $t$]{
\includegraphics[width=.45\textwidth, height=0.21\textheight]{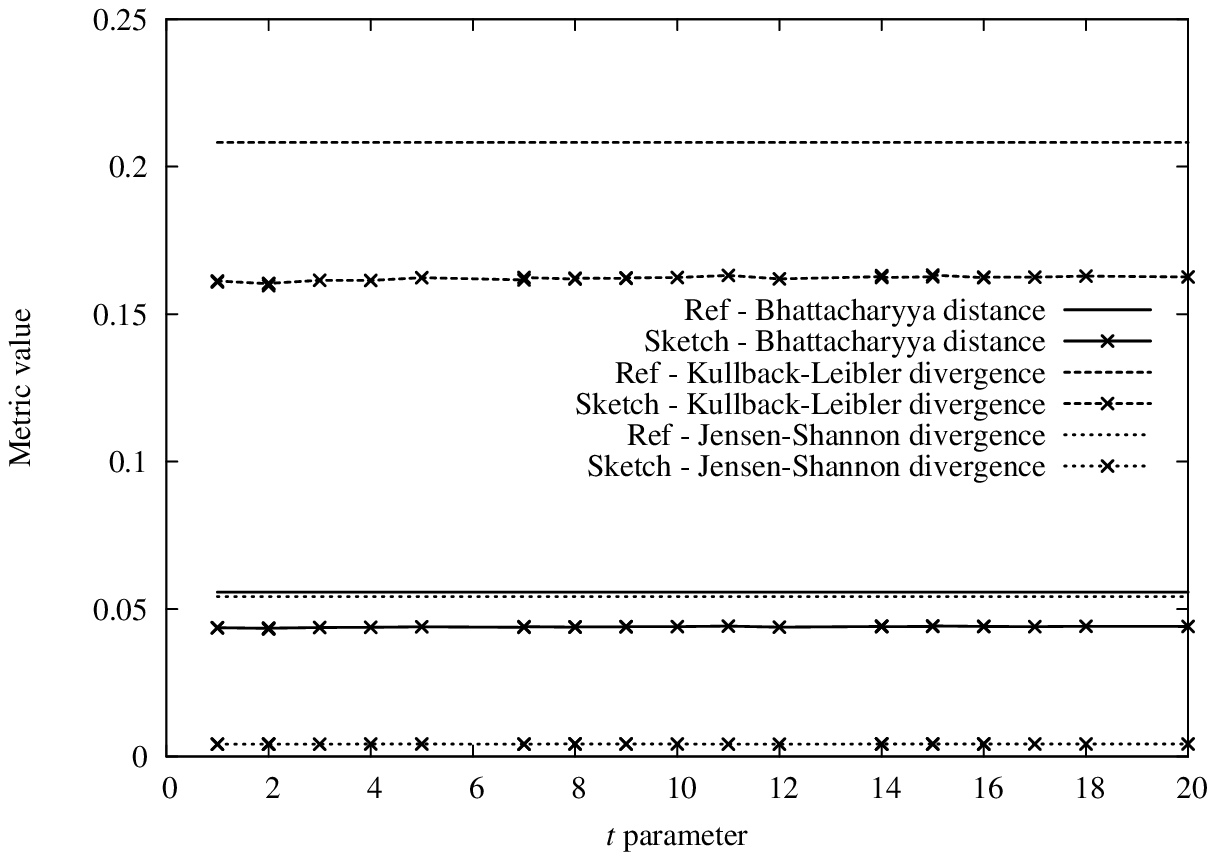}
\label{fig:test_real_delta}
}
\caption{\emph{Sketch $\star$-metric}  between the Uniform distribution and Pascal with parameter $p=\frac{n}{2r+n}$ (Figures~\ref{fig:test_pascal_epsilon} and~\ref{fig:test_pascal_delta}), and between data trace extracted from ClarkNetwork (August) and Saskatchewan University (Figures~\ref{fig:test_real_epsilon} and~\ref{fig:test_real_delta}).}  
\label{fig:tparameters}
\end{figure}

\begin{figure*}
\centering
\subfigure[Value of Bhattacharyya distance]{
\includegraphics[width=.45\textwidth, height=0.21\textheight]{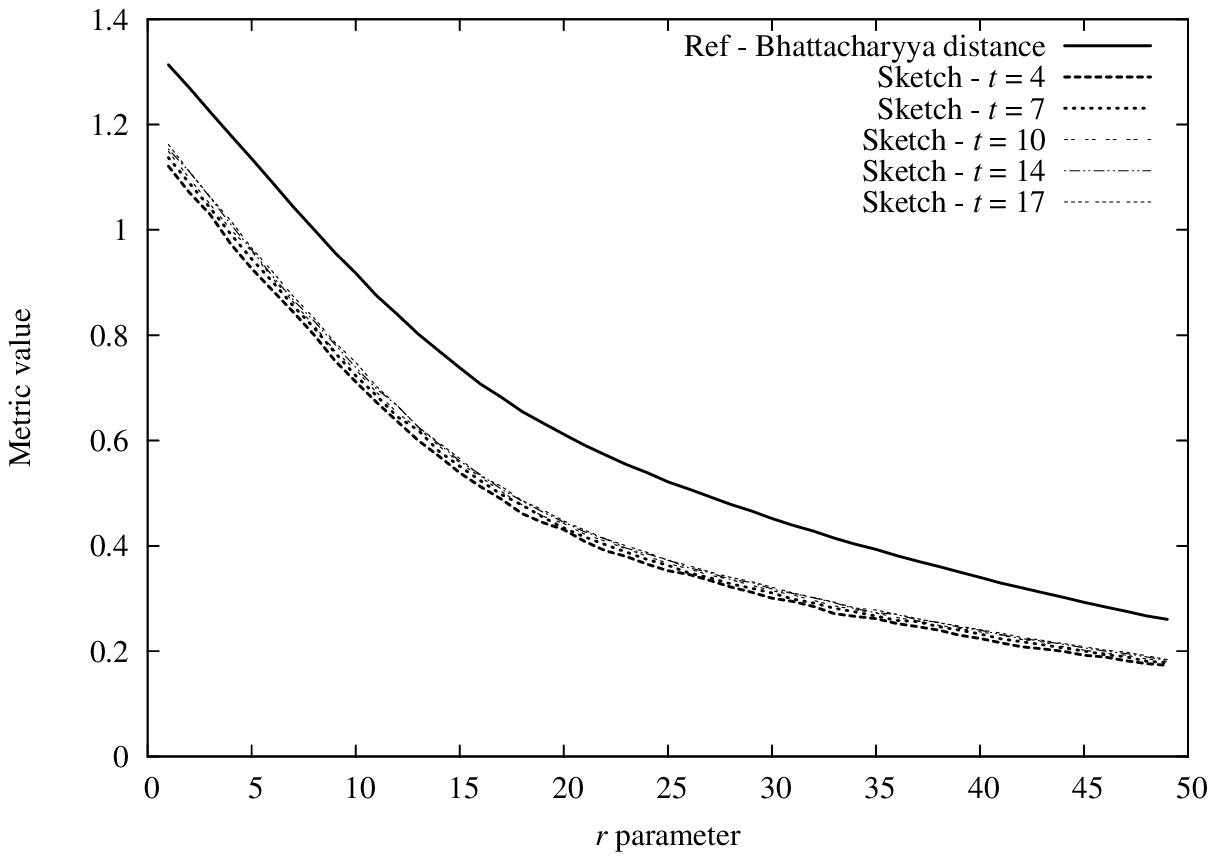}
\label{fig:test_pascal_r_delta_b}
}
\subfigure[Difference with Bhattacharyya distance]{
\includegraphics[width=.45\textwidth, height=0.21\textheight]{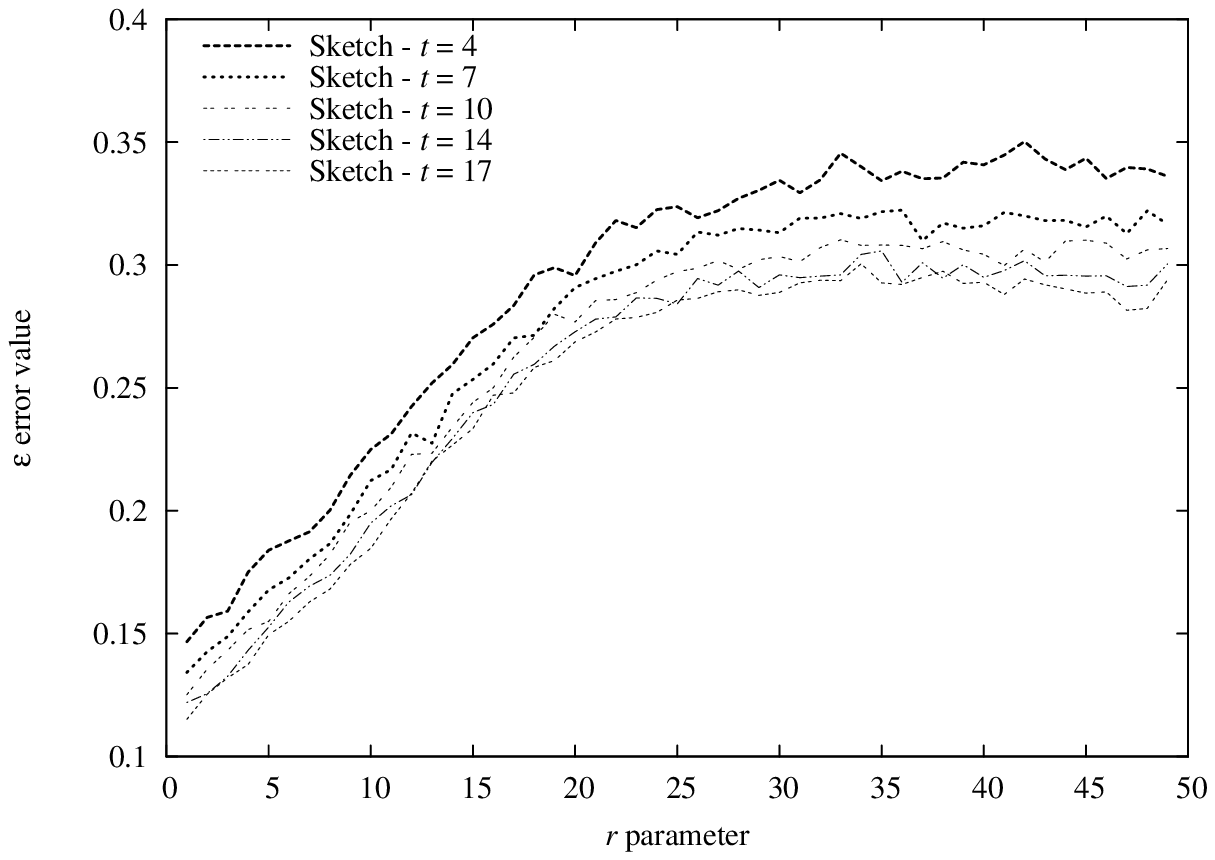}
\label{fig:test_pascal_r_delta_b_error}
}
\subfigure[Value of Kullback-Leibler divergence]{
\includegraphics[width=.45\textwidth, height=0.21\textheight]{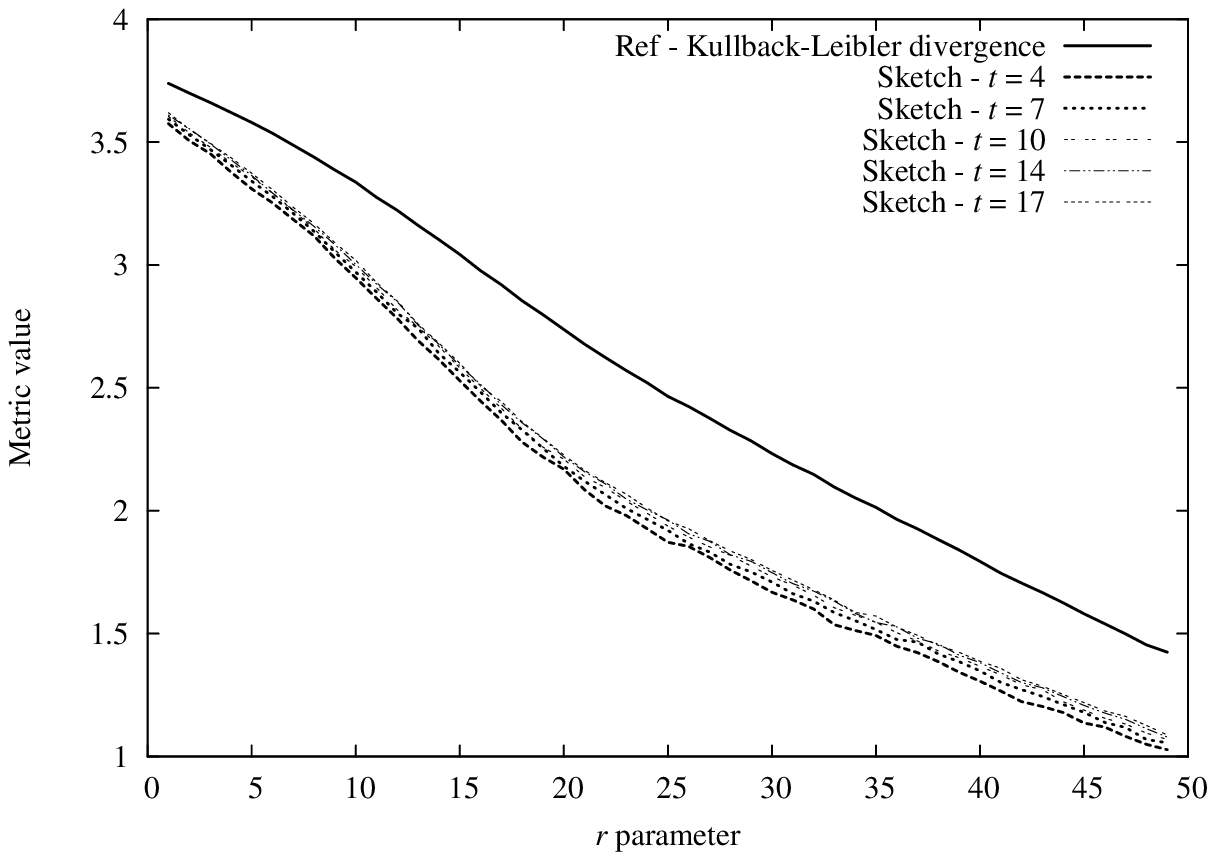}
\label{fig:test_pascal_r_delta_kl}
}
\subfigure[Difference with Kullback-Leibler divergence]{
\includegraphics[width=.45\textwidth, height=0.21\textheight]{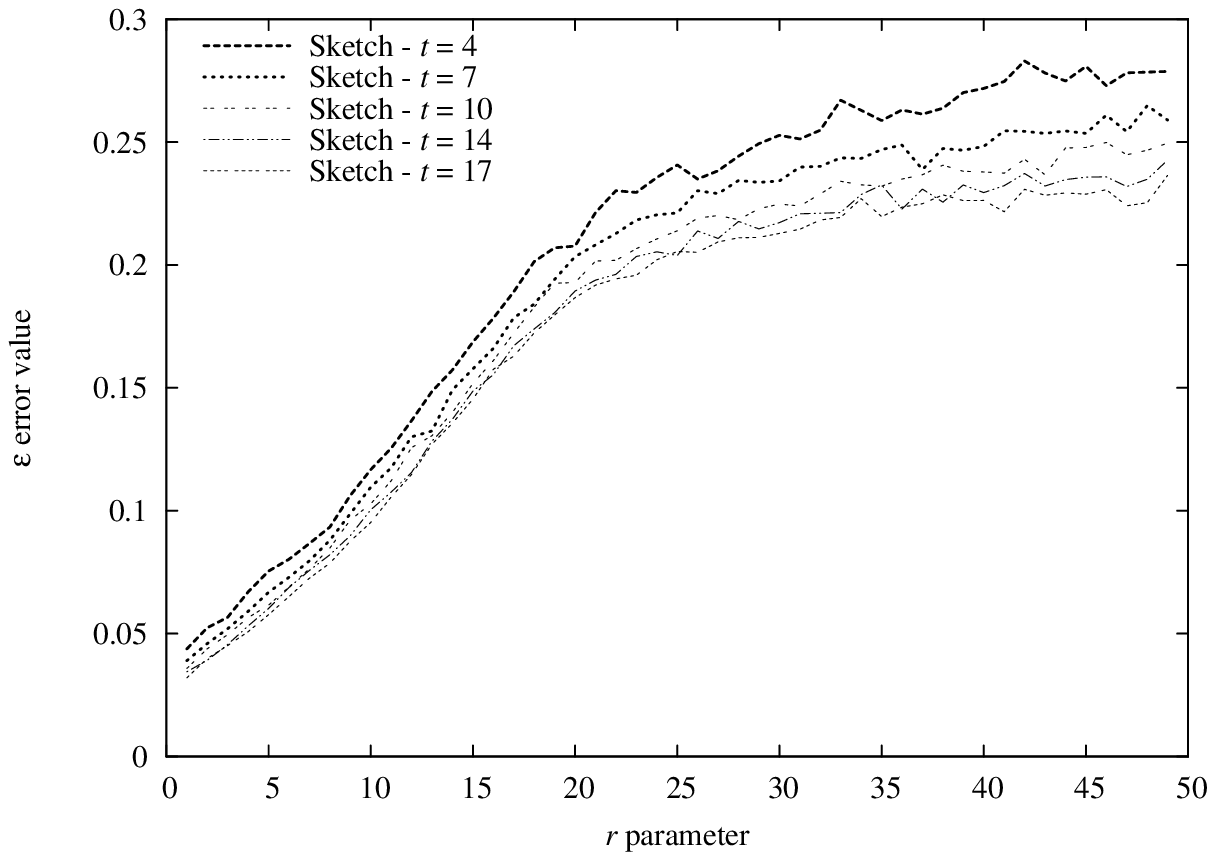}
\label{fig:test_pascal_r_delta_kl_error}
}
\subfigure[Value of Jensen-Shannon divergence]{
\includegraphics[width=.45\textwidth, height=0.21\textheight]{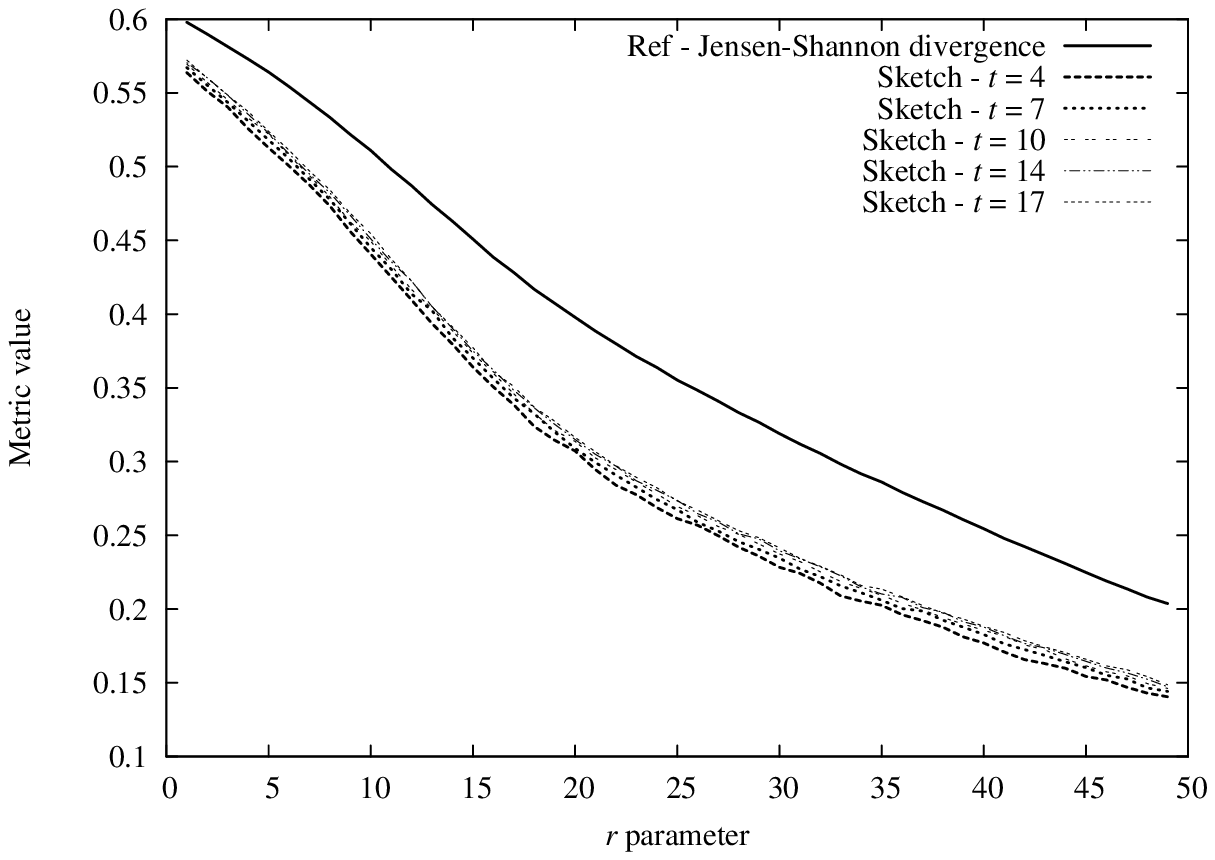}
\label{fig:test_pascal_r_delta_js}
}
\subfigure[Difference with Jensen-Shannon divergence]{
\includegraphics[width=.45\textwidth, height=0.21\textheight]{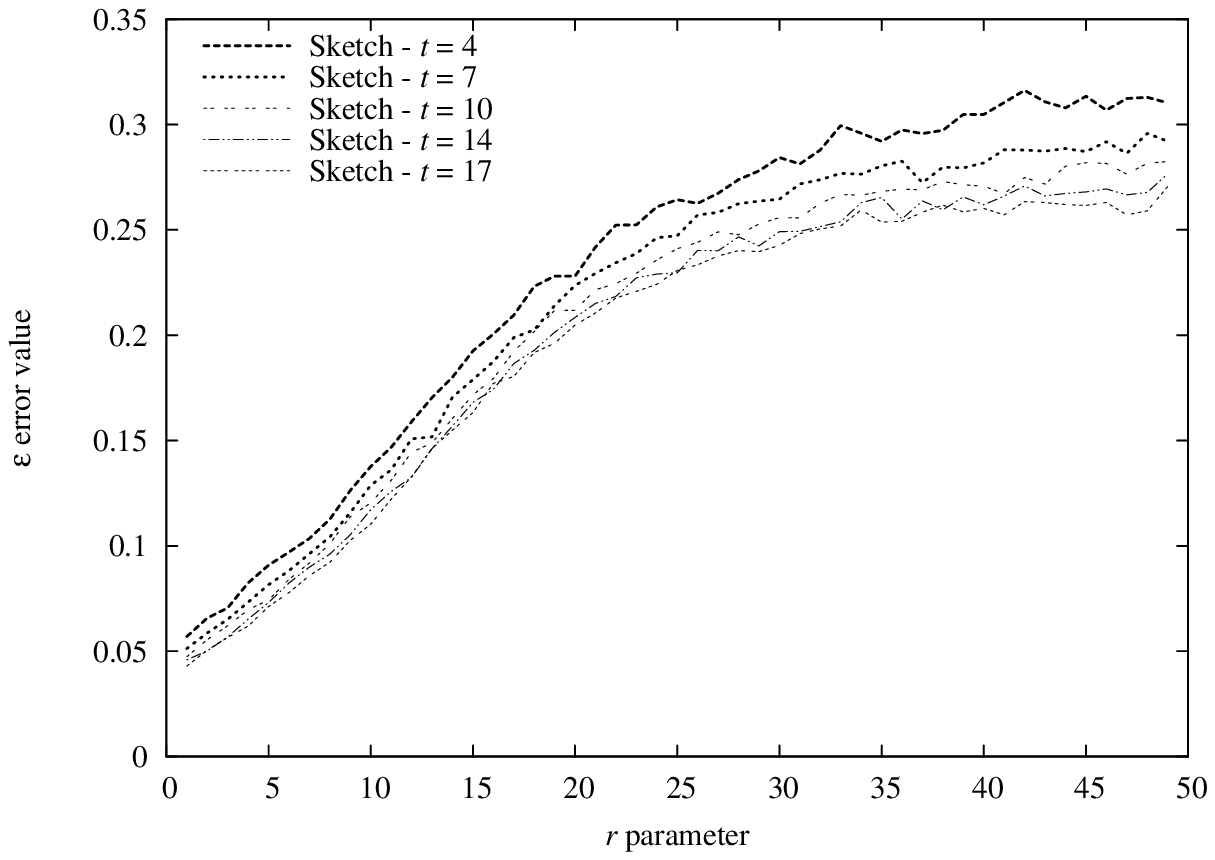}
\label{fig:test_pascal_r_delta_js_error}
}
\caption{\emph{Sketch $\star$-metric} estimation between Uniform distribution and Pascal with parameter $p=\frac{n}{2r+n}$, as a function of $k$, $t$ and $r$.}  
\label{fig:parameters}
\end{figure*}


\end{document}